\documentclass[12pt]{article}

\usepackage{lipsum}
\usepackage{amsthm,amssymb,amsmath}

\usepackage{xr-hyper}
\usepackage{amsthm}
\usepackage{fullpage}
\usepackage{footnote}
\usepackage{tabu}
\usepackage{booktabs}
\usepackage[title]{appendix}
\usepackage{scribe}
\usepackage{listings}
\usepackage[mathscr]{euscript}

\usepackage{verbatim}
\usepackage{rotating}
\usepackage{float}
\usepackage{setspace}
\usepackage{url}
\usepackage{enumitem}
\usepackage{pdfpages}
\usepackage{blkarray}
\usepackage{mathtools}

\usepackage{ stmaryrd }
 
\usepackage{graphicx}
\usepackage{subfigure}
\usepackage{multirow}
\usepackage{amsmath}
\usepackage{xcolor}

\providecommand{\keywords}[1]
{
  \small	
  \textbf{\textit{Keywords---}} #1
}
\usepackage{etoolbox}

\makeatletter
\patchcmd{\@makecaption}
  {\parbox}
  {\advance\@tempdima-\fontdimen2} 
  {}{}
\makeatother 

\usepackage{titling} 

\usepackage{amsmath,bbm,amssymb, amsfonts}
\usepackage{mathtools}
\usepackage{graphicx,psfrag,epsf}
\usepackage{enumitem}
\usepackage{natbib}
\usepackage{url} 
\usepackage{color}

\usepackage{algorithm}
\usepackage{algpseudocode}

\algnewcommand\INPUT{\item[\textbf{Input:}]}%
\algnewcommand\OUTPUT{\item[\textbf{Output:}]}%
\usepackage{booktabs}
\usepackage{siunitx}
\usepackage{array}
\usepackage{rotating}

\usepackage{adjustbox}

\usepackage{tikz}

\usepackage{url}

\usepackage{multicol}
\usepackage{xcolor}

\makeatletter
\def\breve{\mathpalette\wide@breve}
\def\wide@breve#1#2{\sbox\z@{$#1#2$}%
	\mathop{\vbox{\m@th\ialign{##\crcr
				\kern0.08em\brevefill#1{0.8\wd\z@}\crcr\noalign{\nointerlineskip}%
				$\hss#1#2\hss$\crcr}}}\limits}
\def\brevefill#1#2{$\m@th\sbox\tw@{$#1($}%
	\hss\resizebox{#2}{\wd\tw@}{\rotatebox[origin=c]{90}{\upshape(}}\hss$}
\makeatletter

\def\gobblestop#1#2{#1}
\def\killstop{%
	\aftergroup\gobblestop
}

\def\bx{{\bf x}}

\def\bA{{\bf A}}
\def\bB{{\bf B}}
\def\bC{{\bf C}}
\def\bD{{\bf D}}
\def\bE{{\bf E}}

\def\bI{{\bf I}}

\def\bL{{\bf L}}

\def\bR{{\bf R}}
\def\bS{{\bf S}}

\def\bU{{\bf U}}
\def\bV{{\bf V}}

\def\bX{{\bf X}}
\def\bY{{\bf Y}}

\def\thick#1{\hbox{\rlap{$#1$}\kern0.25pt\rlap{$#1$}\kern0.25pt$#1$}}

\def\smbalpha{\boldsymbol{{\scriptstyle{\alpha}}}}

\def\smbalpha{\widehat{\smbalpha}}

\def\hbar{\bar{ h}}

\def\mybox#1{\vskip1mm \begin{center}
        \hspace{.0\textwidth}\vbox{\hrule\hbox{\vrule\kern6pt
\parbox{.9\textwidth}{\kern6pt#1\vskip6pt}\kern6pt\vrule}\hrule}
        \end{center} \vskip-5mm}
\def\lboxit#1{\vbox{\hrule\hbox{\vrule\kern6pt
      \vbox{\kern6pt#1\vskip6pt}\kern6pt\vrule}\hrule}}

\def\thickboxit#1{\vbox{{\hrule height 1mm}\hbox{{\vrule width 1mm}\kern6pt
          \vbox{\kern6pt#1\kern6pt}\kern6pt{\vrule width 1mm}}
               {\hrule height 1mm}}}

\def\fat#1{\hbox{\rlap{$#1$}\kern0.25pt\rlap{$#1$}\kern0.25pt$#1$}}

\newcolumntype{R}{@{\extracolsep{0.5cm}}r@{\extracolsep{0pt}}}%
\newcolumntype{E}{@{\extracolsep{0.25cm}}c@{\extracolsep{0pt}}}%

\makeatletter
\newcommand{\distas}[1]{\mathbin{\overset{#1}{\kern\z@\sim}}}%
\makeatother

\usepackage{xr-hyper}
\usepackage[colorlinks, linkcolor=blue, anchorcolor=blue, citecolor=blue]{hyperref}
\usepackage[capitalize]{cleveref}

\newtheorem{thm}{Theorem}[section]

\def\bSddONE{{\bS_{\boldsymbol{\cdot}\boldsymbol{\cdot}}^{(1)}}}

\def\bSddl{{\bS_{\boldsymbol{\cdot}\boldsymbol{\cdot}}^{(l)}}}
\def\bSddlt{{\bS_{\boldsymbol{\cdot}\boldsymbol{\cdot}}^{(l')}}}
\def\bSddik{{\bS_{\boldsymbol{\cdot}\boldsymbol{\cdot}}^{(i-K)}}}
\def\bSdds{{\bS_{\boldsymbol{\cdot}\boldsymbol{\cdot}}^{(s)}}}

\def\bSddL{{\bS_{\boldsymbol{\cdot}\boldsymbol{\cdot}}^{(L)}}}
\def\bSddONEbar{{\bar{\bS}_{\boldsymbol{\cdot}\boldsymbol{\cdot}}^{(1)}}}
\def\bSddlbar{{\bar{\bS}_{\boldsymbol{\cdot}\boldsymbol{\cdot}}^{(l)}}}
\def\bSddLbar{{\bar{\bS}_{\boldsymbol{\cdot}\boldsymbol{\cdot}}^{(L)}}}

\def\bY{\mathbf Y}
\def\bYd{\mathbf{Y}_{\boldsymbol{\cdot}}}
\def\bYhdk{\mathbf{Y}_{h\boldsymbol{\cdot}}^{(k)}}
\def\bYhdkT{\mathbf{Y}_{h\boldsymbol{\cdot}}^{(k)T}}
\def\bYdd{\mathbf{Y}_{\boldsymbol{\cdot\cdot}}}
\def\bX{\mathbf X}
\def\bx{{\bf x}}

\def\bL{{\bf L}}
\def\bC{{\bf C}}

\def\bA{\mathbf A}
\def\bB{\mathbf B}
\def\bBdk{\mathbf{B}_{\boldsymbol{\cdot}}^{(k)}}
\def\bBdhk{\mathbf{B}_{\boldsymbol{\cdot}h}^{(k)}}
\def\bBd{\mathbf{B}_{\boldsymbol{\cdot}}}
\def\bU{\mathbf U}

\def\bX{\mathbf X}
\def\bXdd{\mathbf{X}_{\boldsymbol{\cdot \cdot}}}

\def\bXddl{\mathbf{X}_{\boldsymbol{\cdot \cdot}}^{(l)}}
\def\bx{\mathbf x}
\def\bI{\mathbf I}

\def\bR{\mathbf R}
\def\bS{\mathbf S}

\def\bV{{\bf V}}

\def\bD{\mathbf D}
\def\bE{\mathbf E}
\def\bEdd{\mathbf{E}_{\boldsymbol{\cdot}\boldsymbol{\cdot}}}

\usepackage[left=0.9in,right=0.9in,top=0.95in,bottom=0.95in]{geometry}

\makeatletter
\newcommand*{\addFileDependency}[1]{
  \typeout{(#1)}
  \@addtofilelist{#1}
  \IfFileExists{#1}{}{\typeout{No file #1.}}
}
\makeatother
 
\newcommand*{\myexternaldocument}[1]{%
    \externaldocument{#1}%
    \addFileDependency{#1.tex}%
    \addFileDependency{#1.aux}%
}

\myexternaldocument{supplement}

\newcommand{\blind}{0}

\begin{document}



 

\if0\blind
{
	\title{\bf BARRR: Bidimensional Augmented Reduced Rank Regression for Multi-View Multi-Cohort Data}
	
	\author{
        Jiuzhou Wang$^{1}$\thanks{Corresponding author. Mailing address: University Office Plaza, Ste 200. 2221 University Ave SE, Minneapolis, MN 55414. Phone: +1 612-625-0651. Email: jiuzhou.wang.jz@gmail.com.}, 
        Eric F. Lock$^1$ \\ \\
	    $^1$Division of Biostatistics and Health Data Science, School of Public Health,\\ 
        University of Minnesota 
    }

	\date{}
	\maketitle
} \fi


\begin{abstract}
Bidimensional data integration across multiple cohorts (i.e., sample groups) and views (i.e. sources or modalities) is crucial in biomedical research to derive holistic insights from heterogeneous datasets. However, most existing integrative methods focus on a single dimension, such as multi-view integration across distinct sources or multi-cohort integration across distinct sample groups. Existing approaches for bidimensional integration are scarce and predominantly unsupervised, overlooking the valuable insights offered by covariates. To address this gap, we introduce the Bidimensional Augmented Reduced Rank Regression (baRRR) method. Our approach models covariate-related and unrelated effects concurrently, while also decomposing effect levels—global, partially-shared, and individual—across both dimensions. Extensive simulations demonstrate the advantages of our method with respect to signal recovery, outcome prediction, and missing data imputation. We apply our approach to data from The Cancer Genome Atlas (TCGA), using somatic mutations as covariates to inform patterns of variability across different omics platforms and cancer types. These studies highlight its potential as a robust and adaptable tool for integrative analysis in biomedical research.  \\ \\ 
\keywords{
cancer genomics, data integration, low rank matrix factorization, missing data imputation, nuclear norm, reduced rank regression}
\end{abstract}%

\def\spacingset#1{\renewcommand{\baselinestretch}%
{#1}\small\normalsize} \spacingset{1.5}

\newpage
\spacingset{1.725} 
\setlength{\abovedisplayskip}{7pt}%
\setlength{\belowdisplayskip}{7pt}%
\setlength{\abovedisplayshortskip}{5pt}%
\setlength{\belowdisplayshortskip}{5pt}%

%


\section{Introduction}
\label{paper3_sec:intro}

The rapid progression of high-throughput technologies in biomedicine has demanded new methodology for integrating multiple large datasets.  Increasingly, studies require the integration of so-called bidimensional data, comprising both multi-view (also called multi-source or multi-modality) and multi-cohort data. The former captures measurements from varied sources for the same sample set, while the latter captures measurements from the same features across disparate sample sets. 
A quintessential example of this multi-faceted data structure is the Cancer Genome Atlas (TCGA) Pan-Cancer Project \citep{hoadley2018cell}. From TCGA we consider data from four distinct yet biologically interconnected molecular sources—mRNA, miRNA, DNA methylation, and  proteins—spanning 33 tumor sample cohorts, such as those for brain, breast, and lung cancers. 
We also consider data for somatic mutations, which are the primary underlying cause of cancer and can affect molecular function at the other molecular levels \citep{vogelstein2013cancer}.  
With these data, we seek to answer the following questions: 
\begin{enumerate}
 \item What molecular signatures are indicative of each cancer type? 
 \item What effect do somatic mutations have on downstream molecular activity, and are these effects shared across cancer types?
 \item What are the primary patterns of molecular heterogeneity, and are these patterns shared across cancer types? 
\end{enumerate}
To address these questions comprehensively, we need to use the full collection of data from TCGA with new methodology to properly decompose the different molecular relationships across different sources and cohorts.  

Existing integrative analyses predominantly focus on one dimension, i.e.
multi-view or multi-cohort (see \citet{li2026integrative} for a recent
review). Several methods are unsupervised and perform low-rank matrix decompositions analogous to that in principal component analysis (PCA). Joint and individual variation explained (JIVE, \citep{Lock2013JIVE}) is one of the earliest methods to integrate data from multiple sources by decomposing the signals into joint and individual effects. Several extensions have been proposed to decompose the signals with low-rank approximations \citep{o2019linked,Irina2019SLIDE,zhu2020generalized, kim2023bayesian,yi2023hierarchical}, while some of them use the decomposed signals to predict outcomes \citep{kaplan2017prediction, palzer2022sjive,li2022integrative,samorodnitsky2022bayesian}. Beyond unsupervised methods, supervised multi-source factorization methods have been proposed that only consider the covariate-related effects to predict outcomes \citep{Chen2013ANN,she2013reduced,Li2019iRRR};  these can be considered as extensions of reduced-rank regression, as they assume a low-rank decomposition on the coefficients for a predictive model.

The literature on multi-cohort data integration remains relatively limited. JICO \citep{wang2022joint} proposes a JIVE-like decomposition for covariates when predicting only one univariate outcome. Multiple augmented reduced rank regression (maRRR) \citep{wang2023multiple} decomposes the multi-variate outcomes into covariate effects and auxiliary effects while allowing joint, partial and individual signals, which is suitable for multi-cohort data integration with covariates but not for bidimensional data.

Only a few methods address the decomposition of bidimensional data. BIDIFAC \citep{park2020integrative} extends JIVE into a bidimensional setting, feature global, row-shared, column-shared and individual effects. BIDIFAC+ \citep{Lock2022BADIFACplus} extends BIDIFAC further by allowing partially shared effects with any combinations of views and cohorts. An empirical Bayes formulation \citep{lock2024empirical} estimates the same kind of bidimensional decomposition while determining the amount of shrinkage for each component from the data. Taking shared signals into consideration via the specification of priors, a Bayesian data integration method \citep{argelaguet2020mofa+} only models individual effects without shared structures to decompose. Nevertheless, these methods are entirely unsupervised, often overlooking the invaluable information presented by covariates. Of particular relevance, there is a method using decomposed multi-view multi-cohort covariate information to predict low-dimensional outcomes \citep{butts2021accounting}; however, it ignores shared structures between different views and cohorts.

We propose a general approach that builds on the maRRR method in several ways. First,  bidimensionally linked data are decomposed into shared and unshared structures across both row and column sets. Second, we allow for multiple covariate sets to further decompose the signals into their covariate-related and covariate-unrelated parts.  Third, we accommodate cohort specific intercepts within our framework, allowing for the decomposition of mean effects and variation about the mean. Fourth, we present new theoretical results on convergence of the iterative algorithms. 

Low-rank imputation methods such as SVDImpute \citep{troyanskaya2001missing}, softImpute \citep{mazumder2010spectral} and convex matrix completion \citep{candes2009exact} apply to a single matrix, while linked-matrix approaches \citep{o2019linked,lock2024empirical} can impute blockwise missingness across matrices; none of these use covariates. Our approach uses covariates to impute various forms of missing data, including ``blockwise'' missingness in which entire rows or columns are missing (e.g., entire data sources missing for a sample). For our pan-cancer application, this allows all available data to inform an interpretable decomposition of variation into (1) molecular signatures for each cancer type, (2) variation driven by somatic mutations across molecular levels and cancer types, and (3) patterns of molecular heterogeneity across cancer types that are unrelated to somatic mutations.


\section{Proposed Model}
\label{paper3_sec:method}

In the bidimensional setting, $\bX_{ij}: p_i \times n_j$ symbolizes a data matrix for the $j$'th sample cohort from the $i$'th data source, coupled with covariates $\bY_{hj}: q_h \times n_j$ for the $j$'th sample cohort from the $h$'th data view, where $h=1,...,H$, $i=1,...,I$ and $j=1,...,J$. Concatenations across the sources and cohorts are denoted by subscripts $\boldsymbol{\cdot}$ in the first and second 
 position respectively, e.g., $\bX_{i\boldsymbol{\cdot}} = [\bX_{i1},\bX_{i2},...,\bX_{iJ}] : p_i \times n$ representing outcomes from $i$th source and  $\bY_{h\boldsymbol{\cdot}} = [\bY_{h1},\bY_{h2},...,\bY_{hJ}]: q_h \times n$ representing covariates from $h$th covariate view, as $n = \sum_{j=1}^J n_j$. Therefore, the concatenation of data across all sources and cohorts is denoted by
\begin{align*}
\bX_{\boldsymbol{\cdot}\boldsymbol{\cdot}} = \begin{bmatrix}
\bX_{11} & \bX_{12} & \cdots & \bX_{1J}\\
\vdots & \vdots &\cdots &\vdots \\
\bX_{I1} & \bX_{I2} & \cdots & \bX_{IJ}
\end{bmatrix}: p \times n, p = \sum_{i=1}^I p_i;\\ \bY_{\boldsymbol{\cdot}\boldsymbol{\cdot}} = \begin{bmatrix}
\bY_{11} & \bY_{12} & \cdots & \bY_{1J}\\
\vdots & \vdots &\cdots &\vdots \\
\bY_{H1} & \bY_{H2} & \cdots & \bY_{HJ}
\end{bmatrix}: q \times n, q = \sum_{h=1}^H q_h.
\end{align*}

For our application, we consider TCGA data $\bXdd$ for several patients across $J=27$ cancer types and $I=4$ omics sources, from which we aim to extract the main signals. These omics sources include gene expression, miRNA, methylation CpG probes, and proteins. We consider $H=2$ covariate views for $\bYdd$, representing (1) somatic mutations for commonly mutated genes and (2) an indicator matrix for the different cancer types. Of note, only $\bXdd$ and $\bYdd$ are observed in the model. We have two primary aims: (1) uncovering multiple shared structures across tumors and omics, regardless of their relationship to somatic mutations; and (2) identifying somatic mutations that are highly related to specific shared structures. 
This involves estimating low-rank coefficient matrices $\bB_{ih}^{(k)}: p_i \times q_h, k = 1,\dots,K_h$ and auxiliary variation structures $\bS_{ij}^{(l)}: p_i \times n_j, l = 1,...,L$. Allowing noise $\bE_{ij}: p_i \times n_j$ in every block, the full model is
\begin{align}\label{paper3_newmodel}
    \bXdd &= \sum^{H}_{h=1} \sum^{K_h}_{k=1} {\mathbf{B}_{\boldsymbol{\cdot}h}^{(k)}} \bY_{h\boldsymbol{\cdot}}^{(k)} + \sum^{L}_{l=1}  \bSddl + \bEdd 
    \end{align}
    where $\bY_{h\boldsymbol{\cdot}}^{(k)} = [\bY_{h1}^{(k)} , \bY_{h2}^{(k)} , \cdots , \bY_{hJ}^{(k)}]: q_h \times n, \bSddl : p \times n, \bEdd : p \times n, $
\begin{align*}
{\mathbf{B}_{\boldsymbol{\cdot}h}^{(k)}} = \begin{bmatrix}
\bB_{1h}^{(k)} \\
\vdots \\
\bB_{Ih}^{(k)} 
\end{bmatrix}: p \times q_h,
\bSddl = \begin{bmatrix}
\bS_{11}^{(l)} & \bS_{12}^{(l)} & \cdots & \bS_{1J}^{(l)}\\
\vdots & \vdots &\cdots &\vdots \\
\bS_{I1}^{(l)} & \bS_{I2}^{(l)} & \cdots & \bS_{IJ}^{(l)}
\end{bmatrix},
\bEdd = \begin{bmatrix}
\bE_{11} & \bE_{12} & \cdots & \bE_{1J}\\
\vdots & \vdots &\cdots &\vdots \\
\bE_{I1} & \bE_{I2} & \cdots & \bE_{IJ}
\end{bmatrix}.
\end{align*}        
Which blocks of $\bB_{ih}^{(k)}$, $\bY_{hj}^{(k)}$ and $\bS_{ij}^{(l)}$ are nonzero is governed by binary indicator matrices $\bR_{Bh}: I\times K_h$, $\bC_{Yh}: J\times K_h$ for $h=1,\dots,H$, and $\bR_S: I\times L$, $\bC_S: J\times L$: 
\begin{align*}
&  \bB_{ih}^{(k)}= 
  \begin{dcases} 
      \boldsymbol{0}_{p_i\times q_h} & \textit{if }  \bR_{Bh}[i,k] = 0 \\ 
  \bU_{Bh,i}^{(k)}\bV_{Bh}^{(k)T} & \textit{if }   \bR_{Bh}[i,k] =  1,
  \end{dcases} 
    \bY_{hj}^{(k)}= 
  \begin{dcases} 
      \boldsymbol{0}_{q_h\times n_j} & \textit{if }  \bC_{Yh}[j,k] = 0 \\ 
  \bY_{hj} & \textit{if }   \bC_{Yh}[j,k] =  1,
  \end{dcases} 
  \\
 &   \bS_{ij}^{(l)} = 
  \begin{dcases} 
  \boldsymbol{0}_{p_i\times n_j} & \textit{if }   \bC_S[j,l] = 0 \textit{ or } \bR_S[i,l] = 0 \\ 
  \bU_{S,i}^{(l)}\bV_{S,j}^{(l)T} & \textit{if }   \bC_S[j,l] = \bR_S[i,l] = 1.
  \end{dcases}
\end{align*}
$\bU_{Bh,i}^{(k)}$,$\bU_{S,i}^{(l)}$ and $\bV_{S,j}^{(l)T}$ can be concatenated, i.e. $\bU_{Bh}^{(k)} = \begin{bmatrix}
\bU_{Bh,1}^{(k)} \\
\vdots \\
\bU_{Bh,I}^{(k)} 
\end{bmatrix}$, $\bU_{S}^{(l)} = \begin{bmatrix}
\bU_{S,1}^{(l)} \\
\vdots \\
\bU_{S,I}^{(l)} 
\end{bmatrix}$ and $\bV_{S}^{(l)} = [\bV_{S,1}^{(l)},...,\bV_{S,J}^{(l)}]$ would be the loading matrices and score matrix for $\bBdhk$ and $\bSddl$ if we write $\bBdhk = \bU_{Bh}^{(k)}\bV_{Bh}^{(k)T}$ and $\bSddl = \bU_{S}^{(l)}\bV_{S}^{(l)T}$. 



A module is the basic unit of our decomposition. In the one-dimensional setting a module is pinned to a subset of cohorts alone; here each module occupies the blocks of $\bXdd$ selected by a pair of subsets, one over the $I$ sources and one over the $J$ cohorts. For an auxiliary module $\bSddl$, the column $\bR_S[:,l]$ selects the sources and $\bC_S[:,l]$ selects the cohorts; outside those blocks the module is identically zero. The covariate-driven modules $\bBdhk \bY_{h\boldsymbol{\cdot}}^{(k)}$ are indexed the same way through $\bR_{Bh}[:,k]$ and $\bC_{Yh}[:,k]$, with the difference that each is tied to a single covariate view $h$ and carries a coefficient block of width $q_h$. Separate indicator matrices and a separate module count $K_h$ are therefore specified per view, so covariate sets with distinct sharing patterns need not be forced into a common structure. Within its blocks every module is constrained to be low rank, through the factorizations $\bSddl = \bU_S^{(l)}\bV_S^{(l)T}$ and $\bBdhk = \bU_{Bh}^{(k)}\bV_{Bh}^{(k)T}$ introduced above. The two families play complementary roles: $\bBdhk \bY_{h\boldsymbol{\cdot}}^{(k)}$ captures structure attributed to the observed covariates, while $\bSddl$ captures structured variation not attributed to them.

\cref{paper3_fig:model} illustrates the construction on a small example.

\begin{figure}[H]
        \centering
\includegraphics[height=16cm,width=13.5cm]{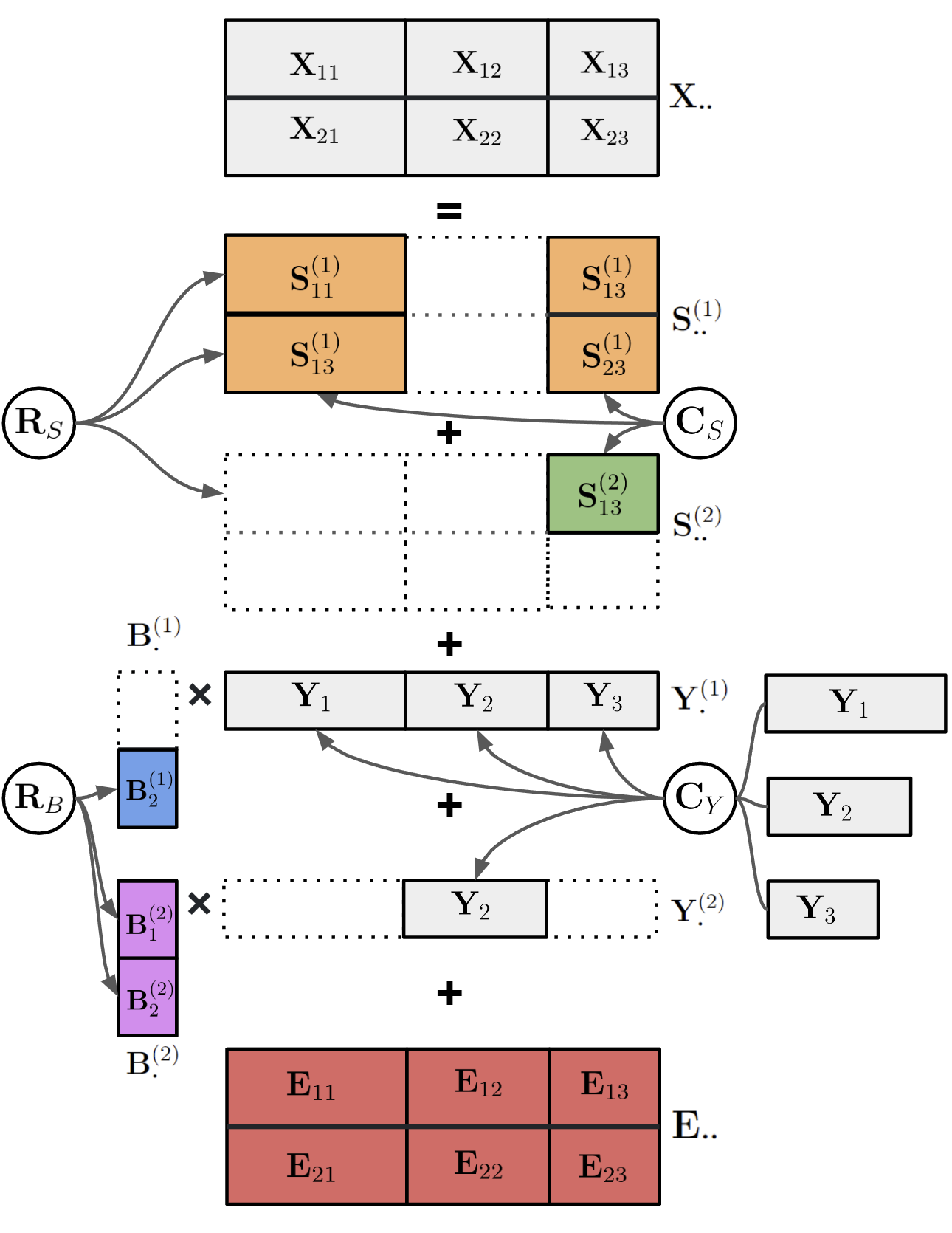}
\caption{
Schematic of baRRR with $I=2$ sources and $J=3$ cohorts, with $H=1$ view of covariates: $\bXdd = \bBd^{(1)} \bYd^{(1)} + \bBd^{(2)} \bYd^{(2)} + \bSddONE + \bS_{\boldsymbol{\cdot\cdot}}^{(2)} + \bEdd$. Grey blocks are observed, namely
$\bXdd = \begin{bmatrix}
\bX_{11} & \bX_{12} & \bX_{13} \\
\bX_{21} & \bX_{22} & \bX_{23}
\end{bmatrix}$ and $\bYd = [\bY_1, \bY_2, \bY_3]$; colored blocks are estimated. Fixing the
four indicator matrices $\bR_S = \begin{bmatrix} 1 & 1 \\ 1 & 0 \end{bmatrix}$,
$\bC_S^T = \begin{bmatrix} 1 & 0 & 1 \\ 0 & 0 & 1 \end{bmatrix}$,
$\bR_B = \begin{bmatrix} 0 & 1 \\ 1 & 1 \end{bmatrix}$,
$\bC_Y^T = \begin{bmatrix} 1 & 1 & 1 \\ 0 & 1 & 0 \end{bmatrix}$ pins each module to its blocks of $\bXdd$: $\bSddONE$ spans both sources but only cohorts 1 and 3, $\bS_{\boldsymbol{\cdot\cdot}}^{(2)}$ only source 1 of cohort 3, $\bBd^{(1)} = [\mathbf{0}, \bB_2^{(1)}]^T$ acts on source 2 across all cohorts, and $\bBd^{(2)} = [\bB_1^{(2)}, \bB_2^{(2)}]^T$ on both sources within cohort 2. Hence $\bYd^{(1)} = [\bY_1, \bY_2, \bY_3]$, $\bYd^{(2)} = [\mathbf{0}, \bY_2, \mathbf{0}]$,
$\bSddONE = \begin{bmatrix}
\bU_{S1}^{(1)} \bV_{S1}^{(1)T} & \mathbf{0} & \bU_{S1}^{(1)} \bV_{S3}^{(1)T} \\
\bU_{S2}^{(1)} \bV_{S1}^{(1)T} & \mathbf{0} & \bU_{S2}^{(1)} \bV_{S3}^{(1)T}
\end{bmatrix}$,
$\bS_{\boldsymbol{\cdot\cdot}}^{(2)} = \begin{bmatrix}
\mathbf{0} & \mathbf{0} & \bU_{S1}^{(2)} \bV_{S3}^{(2)T} \\
\mathbf{0} & \mathbf{0} & \mathbf{0}
\end{bmatrix}$,
and $\bEdd = \begin{bmatrix}
\bE_{11} & \bE_{12} & \bE_{13} \\
\bE_{21} & \bE_{22} & \bE_{23}
\end{bmatrix}$.
}

\label{paper3_fig:model}
\end{figure}

\section{Objective Function} \label{paper3_sec:obejective}

Our estimator for model~\eqref{paper3_newmodel} balances two goals: reproducing $\bXdd$ closely, as measured by squared error, and keeping each module low rank, as enforced by a nuclear norm penalty with its own weight:
\begin{small}
\begin{align}\label{paper3_ObjZ}
    \min_{\{\bBdhk\}^{K_h}_{k=1},\{\bS_{\boldsymbol{\cdot}\boldsymbol{\cdot}}^{(l)}\}^L_{l=1}} \{ \frac{1}{2} ||\bX_{\boldsymbol{\cdot}\boldsymbol{\cdot}} - \sum^{H}_{h=1} \sum^{K_h}_{k=1} \bBdhk \bYhdk - \sum^{L}_{l=1}  \bS_{\boldsymbol{\cdot}\boldsymbol{\cdot}}^{(l)}||_F^2 + \sum^{H}_{h=1} \sum^{K_h}_{k=1} \lambda_{Bh}^{(k)} ||\bBdhk||_{*} +\sum^{L}_{l=1} \lambda_S^{(l)} ||\bS_{\boldsymbol{\cdot}\boldsymbol{\cdot}}^{(l)}||_{*}  \}
\end{align}    
\end{small}
Each $||\cdot||_*$ is a nuclear norm, the sum of a matrix's singular values. Because it is convex, \eqref{paper3_ObjZ} is a convex problem; its effect on each module is to shrink singular values toward zero and set the smallest exactly to zero, so module ranks are chosen by the fit rather than fixed in advance.

Two previously proposed methods arise as special cases of \eqref{paper3_newmodel}:
\begin{enumerate}
    \item maRRR \citep{wang2023multiple}, for multi-cohort data from a
    single source with a single covariate view, obtained with $I=H=1$.
    \item BIDIFAC+ \citep{Lock2022BADIFACplus}, the fully unsupervised
    decomposition, obtained by removing the covariate terms ($H=0$).
\end{enumerate}

\cite{mazumder2010spectral} and others have noted that the nuclear-norm penalty is equivalent to applying an additive $L_2$ penalty to the factors in a low-rank decomposition. This equivalence yields an alternative expression for our objective in \eqref{paper3_ObjZ}:
\begin{align}\label{paper3_ObjUV}
    &\min_{\substack{\{\bU_{Bh}^{(k)},\bV_{Bh}^{(k)}\}^{K_h}_{k=1}, \{\bU_S^{(l)},\bV_S^{(l)}\}^L_{l=1}}}  \frac{1}{2}\{ ||\bXdd - \sum^H_{h=1} \sum^{K_h}_{k=1} \bU_{Bh}^{(k)}\bV_{Bh}^{(k)T} \bYhdk - \sum^{L}_{l=1}  \bU_S^{(l)}\bV_S^{(l)T}||_F^2 + \nonumber\\
    & \sum^H_{h=1} \sum^{K_h}_{k=1} \lambda_{Bh}^{(k)} (||\bU_{Bh}^{(k)}||_{F}^2 +||\bV_{Bh}^{(k)}||_{F}^2) +\sum^{L}_{l=1} \lambda_S^{(l)}( ||\bU_S^{(l)}||_{F}^2 + ||\bV_S^{(l)}||_{F}^2).  \}
\end{align}
Here, we only need to set the total number of columns for each of $\bU_{Bh}^{(k)}$, $\bV_{Bh}^{(k)}$, $\bU_S^{(l)}$, and $\bV_S^{(l)}$.  This gives an upper bound on the rank of each $\bBdhk$ and $\bSddl$, i.e. $r_{Bh,upper}$ and $r_{S,upper}$.    
However, the actual ranks of the solution may be lower because the nuclear norm penalty promotes rank sparsity. If the upper bounds are sufficiently large, the solution will coincide with that in ~\eqref{paper3_ObjZ}. The following theorem, establishing this equivalence, is a natural extension to Theorem 1 in \cite{wang2023multiple}.

\begin{thm}\label{paper3_equ}
The solutions to objective functions (\ref{paper3_ObjZ}) and (\ref{paper3_ObjUV}) are equivalent, with $\bS_{\boldsymbol{\cdot}\boldsymbol{\cdot}}^{(l)} = \bU_S^{(l)}\bV_S^{(l)T}$ for $l=1,\hdots,L$ and $\bBdhk=\bU_{Bh}^{(k)}\bV_{Bh}^{(k)T}$ for $h=1,\hdots,H$ and $k=1,\hdots,K$. 
\end{thm}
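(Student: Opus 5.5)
The plan is to reduce the statement to the classical variational characterization of the nuclear norm and then apply it module by module. For any matrix $\mathbf{Z}$ of rank at most $r$,
\[
\|\mathbf{Z}\|_* = \min_{\mathbf{U}\mathbf{V}^T = \mathbf{Z}} \tfrac{1}{2}\left(\|\mathbf{U}\|_F^2 + \|\mathbf{V}\|_F^2\right),
\]
where $\mathbf{U}$ and $\mathbf{V}$ have $r$ columns. The minimum is attained at $\mathbf{U} = \mathbf{P}\mathbf{D}^{1/2}$ and $\mathbf{V} = \mathbf{Q}\mathbf{D}^{1/2}$, with $\mathbf{Z} = \mathbf{P}\mathbf{D}\mathbf{Q}^T$ a (padded) SVD. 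The inequality in the other direction follows from
\[
\|\mathbf{U}\mathbf{V}^T\|_* \le \|\mathbf{U}\|_F\|\mathbf{V}\|_F \le \tfrac12\left(\|\mathbf{U}\|_F^2+\|\mathbf{V}\|_F^2\right),
\]
which combines the trace-duality/Cauchy--Schwarz bound with the AM--GM inequality. One bookkeeping point comes first. The penalty weights in (\ref{paper3_ObjUV}) sit inside the factor $\frac12$, so the factor penalty reads $\frac{\lambda}{2}(\|\mathbf{U}\|_F^2+\|\mathbf{V}\|_F^2)$, and this matches $\lambda\|\cdot\|_*$ in (\ref{paper3_ObjZ}).

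The block-sparsity structure needs some care, since the factors in (\ref{paper3_ObjUV}) are constrained to vanish on blocks where the indicator is zero. For $\bSddl$, the nonzero support is the submatrix indexed by the rows of sources with $\bR_S[i,l]=1$ and the columns of cohorts with $\bC_S[j,l]=1$. Outside this support, $\bU_{S,i}^{(l)}=\mathbf 0$ and $\bV_{S,j}^{(l)}=\mathbf 0$, so the nuclear norm and Frobenius norms are those of the restricted submatrix and its restricted factors. I will therefore apply the lemma to this submatrix and pad the factors with zeros. For $\bBdhk$, the zero pattern concerns only the row blocks via $\bR_{Bh}[:,k]$, and $\bY_{h\boldsymbol{\cdot}}^{(k)}$ is a fixed matrix. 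I will apply the lemma to the row-restricted $\bBdhk$, with $\bV_{Bh}^{(k)}$ unconstrained, $q_h\times r$. In both cases the constraint set of (\ref{paper3_ObjZ}), namely block-supported matrices, corresponds exactly to products of zero-padded factors.

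Denote the objectives by $F$ for (\ref{paper3_ObjZ}) and $G$ for (\ref{paper3_ObjUV}). The squared-error term depends on the factors only through the products $\mathbf{U}\mathbf{V}^T$. Hence for any factors, $G(\text{factors}) \ge F(\text{products})$, by the lemma applied to each module separately. Conversely, suppose $\{\bBdhk,\bSddl\}$ minimize $F$ and the upper bounds $r_{Bh,upper}$ and $r_{S,upper}$ are at least the ranks of these minimizers; the safe choice is $\min$ of the block dimensions. Then the SVD-based factors achieve $G = F$. Consequently $\min G = \min F$. Any minimizer of $G$ yields products that minimize $F$, since $F(\text{products}) \le G = \min F$. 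Any minimizer of $F$ yields factors minimizing $G$ via the balanced SVD factorization. This is the claimed equivalence with $\bSddl=\bU_S^{(l)}\bV_S^{(l)T}$ and $\bBdhk=\bU_{Bh}^{(k)}\bV_{Bh}^{(k)T}$.

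The main obstacle is the rank condition. The equivalence holds only when the factor column counts are large enough to represent a minimizer of (\ref{paper3_ObjZ}), as noted before the theorem. I would state this explicitly, for example by taking the column counts at least as large as the dimensions of each module's support. Since this is an extension of Theorem 1 of \cite{wang2023multiple}, I would otherwise follow that proof verbatim, adding only the support-restriction step above for the bidimensional indicators $\bR_S,\bC_S,\bR_{Bh},\bC_{Yh}$.
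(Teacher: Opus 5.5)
Your proposal is correct and follows the same route as the paper. The paper gives no separate argument beyond invoking the factor form of the nuclear norm from \cite{mazumder2010spectral}, as in Theorem 1 of \cite{wang2023multiple}, and your module-by-module use of $\|\mathbf{Z}\|_*=\min_{\mathbf{U}\mathbf{V}^T=\mathbf{Z}}\frac12(\|\mathbf{U}\|_F^2+\|\mathbf{V}\|_F^2)$ — with the block-support restriction, the $\frac12$ bookkeeping and the rank-bound caveat — is exactly that argument made explicit (existence of the minimizers you use follows because every $\lambda>0$ makes both objectives coercive).
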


In Section~\ref{paper3_sec:est} we demonstrate the optimization algorithm, prove the convergence of our parameter estimations and describe a random matrix theory approach to automatically select the nuclear norm penalty for the different modules.

\section{Estimation}
\label{paper3_sec:est}

\subsection{Optimization algorithms}
\label{paper3_subsec:optimization}
Both formulations are optimized by cycling through the modules, holding all but one fixed and solving the resulting subproblem exactly. Under \eqref{paper3_ObjUV} a block is a single factor matrix $\bU_{Bh}^{(k)}$, $\bV_{Bh}^{(k)}$, $\bU_S^{(l)}$ or $\bV_S^{(l)}$; since the penalty on the factors is quadratic, each update is a ridge-type least squares problem with a closed-form solution (\cref{ALSUV_barrr}). 
Under \eqref{paper3_ObjZ} a block is an entire module, and when $\bYd^{(k)}$ is semi-orthogonal its update is a single soft-thresholded singular value decomposition of the current residual (\cref{ALSBS_barrr}).

Each epoch is one full pass through all blocks, and both algorithms stop once successive epochs change the estimates by less than a tolerance $\epsilon$:$\sum^{H}_{h=1}\sum^{K_h}_{k=1} ||\widehat{\bB}_{\boldsymbol{\cdot h}}^{(k)} - \widetilde{\bB}_{\boldsymbol{\cdot h}}^{(k)} ||_F^2 + \sum^{L}_{l=1} || \widehat{\bS}_{\boldsymbol{\cdot\cdot}}^{(l)} - \widetilde{\bS}_{\boldsymbol{\cdot\cdot}}^{(l)}||_F^2 < \epsilon$, with hats marking the current epoch and tildes the previous one. A stopping rule based on the change in the objective value works equally well. Both algorithms are blockwise coordinate descent methods, and Section~\ref{paper3_subsec:convergence} establishes their convergence.

For practical analyses, it is advisable to first centralize and standardize each individual outcome matrix $X_{ij}$. This step helps negate the impact of any intercept effect and ensure the standard deviation of noises smaller than 1. Moreover, orthogonalizing $\bYd^{(k)}$ is recommended. After these preliminary steps, penalties can be designated for each term as described in Section~\ref{paper3_subsec:penalty}.

\subsection{Convergence theory}
\label{paper3_subsec:convergence}

Let $\{\bx^k\}$ denote the sequence of solutions at each epoch in \cref{ALSUV_barrr} or \cref{ALSBS_barrr}. Since the loss is monotone decreasing over the epochs and bounded, the sequence of loss $\{F(\bx^k)\}$ has a finite limit point $F(\bar{\bx})$ with enough epochs. In what follows, we establish the convergence of our final estimate; 
this novel result implies convergence of the algorithms in \cite{wang2023multiple} and \cite{Lock2022BADIFACplus} as special cases. The proof of this theorem is in \cref{paper3_proof_thm2}.
\begin{thm}
    Let $\{\bx^k\}$ denote the sequence of solutions at each epoch in \cref{ALSUV_barrr} or \cref{ALSBS_barrr}. The sequence $\{\bx^k\}$ converges to $\bar{\bx}$, which is a critical point of \cref{paper3_ObjUV} (or \cref{paper3_ObjZ}). 
\end{thm}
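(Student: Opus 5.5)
We would prove convergence of the whole iterate sequence, not just of the objective values, with the Kurdyka--{\L}ojasiewicz (KL) framework for block coordinate descent, in the form of Attouch--Bolte--Svaiter and Xu--Yin for block-multiconvex problems. Write the objective as $F(\bx)=f(\bx)+\sum_b g_b(\bx_b)$. For \eqref{paper3_ObjUV}, $f$ is the squared-error term, which is polynomial in the factor blocks $\bU_{Bh}^{(k)},\bV_{Bh}^{(k)},\bU_S^{(l)},\bV_S^{(l)}$, and $g_b$ are the quadratic ridge penalties. For \eqref{paper3_ObjZ}, $f$ is the same loss written in the module blocks $\bBdhk,\bSddl$, and $g_b$ are the nuclear norms. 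In both cases $F$ is semi-algebraic: polynomials, Frobenius norms and nuclear norms are semi-algebraic, the last because $\|\bA\|_*=\sup_{\|\bD\|_{op}\le1}\mathrm{tr}(\bD^T\bA)$ with a semi-algebraic constraint set. Hence $F$ has the KL property at every point.

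\textbf{Step 1: sufficient decrease.} Each subproblem is strongly convex in its own block. In \cref{ALSUV_barrr} the update for, e.g., $\bU_{Bh}^{(k)}$ is a ridge problem whose Hessian is at least $\lambda_{Bh}^{(k)}\bI$. In \cref{ALSBS_barrr} with $\bYd^{(k)}$ semi-orthogonal, the loss in $\bBdhk$ is $\tfrac12\|\cdot\|_F^2$ up to constants, and each $\bSddl$ enters through a coordinate projection onto its support. This gives $F(\bx^{k})-F(\bx^{k+1})\ge \tfrac{c}{2}\|\bx^k-\bx^{k+1}\|_F^2$ with $c=\min$ of the penalties (respectively $1$). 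For \eqref{paper3_ObjZ} we restrict to the support blocks fixed by $\bR$ and $\bC$, which are linear constraints that keep semi-algebraicity.

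\textbf{Step 2: boundedness.} $F$ is coercive in both formulations. In \eqref{paper3_ObjUV} the ridge penalties give this directly. In \eqref{paper3_ObjZ} the nuclear norms dominate a multiple of the Frobenius norm. Monotone decrease then keeps $\{\bx^k\}$ in the bounded level set $\{F\le F(\bx^0)\}$, so limit points exist and $\sum_k\|\bx^{k+1}-\bx^k\|^2<\infty$.

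\textbf{Step 3: relative error.} We need $\mathrm{dist}(0,\partial F(\bx^{k+1}))\le C\|\bx^{k+1}-\bx^k\|$. Each block's optimality condition gives $0\in\nabla_b f(\bx^{k+1}_{<b},\bx^{k+1}_b,\bx^k_{>b})+\partial g_b(\bx^{k+1}_b)$. Replacing the stale blocks by their new values changes $\nabla_b f$ by at most $L\|\bx^{k+1}-\bx^k\|$, since $\nabla f$ is Lipschitz on bounded sets, being polynomial. This produces an element of $\partial F(\bx^{k+1})$ of the required size, using that the subdifferential of a separable nonsmooth sum is the product of the block subdifferentials.

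\textbf{Step 4: conclusion.} Combining the three properties with KL gives finite length, $\sum_k\|\bx^{k+1}-\bx^k\|<\infty$, so $\bx^k\to\bar\bx$. Passing to the limit in Step 3, using closedness of $\partial F$ and continuity of $F$ along the sequence, gives $0\in\partial F(\bar\bx)$, so $\bar\bx$ is a critical point. For \eqref{paper3_ObjZ} convexity makes this a global minimizer, and via \cref{paper3_equ} the limit of \eqref{paper3_ObjUV} corresponds to it whenever the rank bounds are large enough.

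\textbf{Main obstacle.} The hardest part is Step 1 for \eqref{paper3_ObjUV}. The decrease in each factor block needs a uniform strong-convexity constant across iterations, which the $\lambda$'s supply. The delicate point is rather the nonconvex bilinear coupling between blocks, so the KL finite-length argument is essential there; monotone decrease alone would not give iterate convergence. If the semi-orthogonality of $\bYd^{(k)}$ fails in \cref{ALSBS_barrr}, we would instead analyze the proximal-linearized update, where sufficient decrease follows from the descent lemma.
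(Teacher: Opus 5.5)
Your argument is correct. It uses the same machinery as the paper's proof, namely KL theory for Gauss--Seidel block coordinate descent as in \cite{xu2013block}, but you verify its hypotheses differently.

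The paper checks the conditions of Lemma~2.6 and Theorem~2.8 of \cite{xu2013block}:
\begin{itemize}
\item blockwise strong convexity, via $\bY^{(k)}\bY^{(k)T}\succeq L_i\bI$ for the coefficient blocks and the identity Hessian for the auxiliary blocks;
\item Lipschitz gradients on bounded sets;
\item the KL inequality, justified by asserting that $F$ itself is strongly convex;
\item an initial point $\bx^0$ sufficiently close to $\bar{\bx}$.
\end{itemize}
It then declares the case \eqref{paper3_ObjUV} identical.

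You instead get the KL property at every point from semi-algebraicity (including the nuclear norm) and bounded iterates from coercivity. You prove sufficient decrease and the relative-error bound directly, with the block modulus for the factored problem supplied by the ridge penalties rather than by $\bY$.

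This buys genuine robustness. Joint strong convexity of $F$ is false for the bilinear objective \eqref{paper3_ObjUV}. It also fails for \eqref{paper3_ObjZ} whenever two modules share a block, since shifting a perturbation from one module to the other leaves the loss unchanged. So your semi-algebraic route gives a valid KL verification where the paper's appeal to strong convexity of $F$ does not apply. Coercivity plus global KL also removes the local assumption on $\bx^0$, so you get convergence from any initialization. The paper's version is shorter only because it hands everything to the cited lemma.

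One caution: your closing remark, that the limit of \cref{ALSUV_barrr} corresponds to the minimizer of \eqref{paper3_ObjZ} once the rank bounds are large enough, does not follow from Theorem~\ref{paper3_equ}. That theorem concerns global minimizers, not critical points. The all-zero factorization is a critical point of \eqref{paper3_ObjUV} and a fixed point of \cref{ALSUV_barrr}, yet generally not a solution of \eqref{paper3_ObjZ}. That sentence should be dropped or qualified; it is not needed for the statement.
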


\subsection{Tuning parameter selection}
\label{paper3_subsec:penalty}

Our penalty weights follow from two random matrix results, Propositions 3 and 4 of \cite{wang2023multiple}, which describe how noise and low-rank signal each shape the singular values of an observed matrix. Because each update in our algorithms changes one module with the rest held fixed, it suffices to choose $\lambda$ for the two single-module problems $\bX = \bS + \bE$ and $\bX = \bB\bY + \bE$.

For $\bX_{m\times n}=\bS_{m\times n}+\bE_{m\times n}$, Proposition 3 bounds the expected largest singular value of pure noise by $\sigma(\sqrt{m}+\sqrt{n})$, while Proposition 4 shows that signal components strong enough to be detectable produce singular values above that level. Thresholding at $\lambda_S=\sigma(\sqrt{m}+\sqrt{n})$ therefore discards components attributable to noise alone while keeping detectable signal. For $\bX_{m\times n}=\bB_{m\times q}\bY_{q\times n}+\bE_{m\times n}$ with $\bY\bY^T=\bI$, right-multiplying by $\bY^T$ gives $\bX\bY^T=\bB+\bE\bY^T$, where $\bE\bY^T$ is again an $m\times q$ matrix of independent Gaussian noise; the same argument yields $\lambda_B=\sigma(\sqrt{m}+\sqrt{q})$.

We apply these choices module by module, taking $m$, $n$ and $q$ to be the dimensions of the blocks each module occupies, and setting $\sigma=1$. This is conservative, since standardizing each outcome matrix to unit standard deviation leaves the noise standard deviation below one. With each $\bY_{h\boldsymbol{\cdot}}^{(k)}$ orthogonalized, the weights are $\lambda_{Bh}^{(k)} = \sqrt{\sum_{i=1}^I p_i\bR_{Bh}[i,k]}+\sqrt{q_h}$ for each $\bBdhk$, and $\lambda_S^{(l)} = \sqrt{\sum_{i=1}^I p_i\bR_S[i,l]}+\sqrt{\sum_{j=1}^J n_j\bC_S[j,l]}$ for each $\bSddl$.

\section{Simulations}
\label{paper3_sec:sim}
We illustrate signal recovery and missing data imputation in a setting with two outcome views, two cohorts, and one covariate view. The model is simplified to $\bXdd = \sum^{K}_{k=1} \bBdk \bY_{\boldsymbol{\cdot}}^{(k)} + \sum^{L}_{l=1}  \bSddl + \bEdd $ where $\bY_{\boldsymbol{\cdot}}^{(k)} = [\bY_{1}^{(k)}  \bY_{2}^{(k)}  \cdots  \bY_{J}^{(k)}]: q \times n$. We consider outcomes $\bXdd$ from two views $p_1 = 150, p_2 = 100$ and two cohorts $n_1 = 200, n_2 = 100$, and covariates $\bY_{\boldsymbol{\cdot}}$ from the same cohorts with $q = 50$. In total 9 covariate-related modules and 9 augmented structures are constructed, with corresponding indicator matrices: 
\\$\bC_Y =\bC_S = \begin{bmatrix}
    1 & 1 & 1 & 1 & 0 & 1 & 1 & 0 & 0\\
    1 & 1 & 1 & 0 & 1 & 0 & 0 & 1 & 1\\
\end{bmatrix}$, 
$\bR_B = \bR_S = \begin{bmatrix}
    1 & 1 & 0 & 1 & 1 & 1 & 0 & 1 & 0\\
    1 & 0 & 1 & 1 & 1 & 0 & 1 & 0 & 1\\
\end{bmatrix}$. Therefore, $\bBd^{(1)}\bY_{\boldsymbol{\cdot}}^{(1)}$ represents the global-shared covariate effect, $\bBd^{(2)}\bY_{\boldsymbol{\cdot}}^{(2)},\bBd^{(3)}\bY_{\boldsymbol{\cdot}}^{(3)}$ are the row-shared covariate effects for view 1 and 2, $\bBd^{(4)}\bY_{\boldsymbol{\cdot}}^{(4)},\bBd^{(5)}\bY_{\boldsymbol{\cdot}}^{(5)}$ are the column-shared covariate effects for cohort 1 and 2, and $\bBd^{(6)}\bY_{\boldsymbol{\cdot}}^{(6)},\bBd^{(7)}\bY_{\boldsymbol{\cdot}}^{(7)},\bBd^{(8)}\bY_{\boldsymbol{\cdot}}^{(8)}, \bBd^{(9)}\bY_{\boldsymbol{\cdot}}^{(9)}$ are the individual covariate effects for $\bX_{11}, \bX_{21}, \bX_{12}, \bX_{22}$ respectively. The effect construction are similar for augmented effects $\bSddl, l=1,...,9$. $\bBdk$ and $\bSddl$ are generated to have true rank of 1 and 5 respectively to mimic the real-world scenario that a good amount of signals may not be captured by existing covariates. We first generate $\bBdk, \bSddl, \bY_j, \bE_{\boldsymbol{\cdot}\boldsymbol{\cdot}}$. Each matrix is of one unit standard deviation and therefore generate $ \bXdd = \sum^{9}_{k=1} a_k \bBdk \bY_{\boldsymbol{\cdot}}^{(k)} + \sum^{9}_{l=1}  b_l \bS_{\boldsymbol{\cdot}\boldsymbol{\cdot}}^{(l)} + \bE_{\boldsymbol{\cdot}\boldsymbol{\cdot}} $, where coefficients $a_k,b_l$ control the relative signal level in each case. By default, all coefficients are set as one.

In the estimation process, we carry out both optimization algorithms and their estimates are effectively identical when setting the same convergence criterion. For \cref{ALSUV_barrr} we set $r_{B,upper} = 7$ and $r_{S,upper} = 10$, respectively. Throughout, accuracy is reported as relative squared error (RSE): for a target matrix $\bA$ and its estimate $\widehat{\bA}$, $RSE(\bA) = ||\bA - \widehat{\bA}||_F^2 / ||\bA||_F^2$. For signal recovery the targets are the total signal, the summed covariate effects $\sum_k \bBdk\bYd^{(k)}$, the summed auxiliary structures $\sum_l \bSddl$, and each individual $\bBdk$ and $\bSddl$, the last two averaged ove modules. For imputation the sums run only over the set $\mathcal{M}$ of held-out entries, with the complete data $\bXdd$, as generated before missingness is introduced, as the target; an RSE of 1 then corresponds to imputing every held-out entry as zero.

\subsection{Signal recovery}
Table~\ref{paper3_tab: signal} reports relative squared errors (RSE) for the total signal, the summed covariate effects, the summed auxiliary structures, and the per-module averages for $\bBdk$ and $\bSddl$, as defined above. Both aggregate and module-level signals are recovered accurately when the coefficient matrices are low rank. Each component is estimated more precisely when its signal is stronger, and covariate effects of higher true rank are harder to recover. When the coefficients are not low rank relative to the covariate dimension (true rank 3 with 10 or 20 covariates), individual covariate modules become difficult to separate, although the aggregate covariate signal is still captured well.

\begin{table}[H]
\caption{Mean RSE of baRRR estimates over 100 replications, by the ratio $||\bB\bY||/||\bS||$ of covariate-effect to auxiliary-structure standard deviation, the true rank of each $\bBdk$, and the number of covariates in each $\bY_j$. Columns give RSE for the total signal, summed covariate effects, summed auxiliary structures, and the module averages for $\bBdk$ and $\bSddl$.}
\label{paper3_tab: signal}
\begin{tabular}{cccccccc}
\hline
$||\bB\bY||/||\bS||$ & rank  & dim  & RSE({$\bX$}) & RSE({$\bB\bY$}) & RSE({$\bS$}) & mean of & mean of \\ 
 &  of $\bB$ &  of $\bY$ &  &  &  &  RSE({$\bB$}) &  RSE({$\bS$}) \\ \hline
0.25 & 1         & 10       & 0.067  & 0.12    & 0.042  & 0.276        & 0.118        \\
     & 1         & 20       & 0.067  & 0.127   & 0.042  & 0.216        & 0.118        \\
     & 1         & 50       & 0.067  & 0.155   & 0.044  & 0.254        & 0.121        \\
     & 3         & 10       & 0.067  & 0.194   & 0.044  & 0.607        & 0.12         \\
     & 3         & 20       & 0.068  & 0.23    & 0.045  & 0.483        & 0.123        \\
     & 3         & 50       & 0.069  & 0.287   & 0.048  & 0.504        & 0.129        \\ \hline
1    & 1         & 10       & 0.133  & 0.028   & 0.14   & 0.133        & 0.251        \\
     & 1         & 20       & 0.134  & 0.03    & 0.141  & 0.076        & 0.252        \\
     & 1         & 50       & 0.134  & 0.036   & 0.146  & 0.072        & 0.259        \\
     & 3         & 10       & 0.136  & 0.057   & 0.153  & 0.473        & 0.265        \\
     & 3         & 20       & 0.137  & 0.07    & 0.159  & 0.269        & 0.275        \\
     & 3         & 50       & 0.139  & 0.087   & 0.165  & 0.207        & 0.29         \\ \hline
4    & 1         & 10       & 0.066  & 0.005   & 0.396  & 0.081        & 0.557        \\
     & 1         & 20       & 0.067  & 0.006   & 0.398  & 0.033        & 0.558        \\
     & 1         & 50       & 0.067  & 0.007   & 0.404  & 0.023        & 0.567        \\
     & 3         & 10       & 0.067  & 0.012   & 0.417  & 0.419        & 0.575        \\
     & 3         & 20       & 0.068  & 0.016   & 0.43   & 0.162        & 0.595        \\
     & 3         & 50       & 0.069  & 0.02    & 0.442  & 0.085        & 0.615  \\     \hline
\end{tabular}
\end{table}

\subsection{Missing data imputation}
\label{paper3_subsec:sim-missing}
We aim to evaluate the missing imputation performance of our proposed method in comparison with existing techniques. Operating under the assumption that data is missing completely at random, we  introduced four missing schemes as detailed below:

\begin{itemize}
\item \textbf{Partial Column Missing:} For a designated view, we introduce column-wise missingness independently to the samples. This results in varied columns missing across different views.
\item \textbf{Full Column Missing:} From the columns that originally have no missing data, we introduce complete column-wise missingness to certain samples across all views. This implies that specific columns will be absent across every view.
\item \textbf{Partial Row Missing:} For each cohort, we introduce row-wise missingness independently to the features. This leads to different rows being absent for various cohorts. Notably, we do not consider full row-wise missing akin to columns. This is because fully missing rows cannot be imputed; such an absence merely reduces the feature count, rendering prediction unfeasible.
\item \textbf{Entry Missing:} We introduce missingness randomly among entries that are not selected as missing in the first three schemes.
\end{itemize}

We allocate an equal proportion of missingness (m\% of the original data) at each step. Owing to the overlap of entries affected by both row and column missing schemes, the total missingness may be slightly reduced, but it should approximate 4m \%. These missing entries can be classified into six distinct categories:
\begin{enumerate}
    \item Entry only: selected by entry missing.
    \item Row only: selected by row missing only.
    \item Partial column only: selected by partial column missing only.
    \item Full column only: selected by full column missing only.
    \item Partial column and row: selected by both partial column missing and row missing.
    \item Full column and row: selected by both full column missing and row missing.
\end{enumerate}

We compare baRRR, which uses all 18 modules, with ten special cases that
keep only a subset of them (Table~\ref{paper3_tab:methods}), using the module
indexing above: 1 global, 2--3 row-shared, 4--5 column-shared, 6--9
individual. 

\begin{table}[H]
\centering
\small
\caption{Methods compared in the simulation, defined by which covariate-effect
($\bB\bY$) and auxiliary ($\bS$) modules each retains.}
\label{paper3_tab:methods}
\begin{tabular}{lccl}
\hline
Method & $\bB\bY$ modules & $\bS$ modules & Description \\ \hline
baRRR             & 1--9 & 1--9 & proposed method \\
BIDIFAC+          & --   & 1--9 & \citet{Lock2022BADIFACplus}; equals BIDIFAC \citep{park2020integrative} here \\
bRRR              & 1--9 & --   & covariate effects only \\
baRRR, row-shared & 2--3 & 2--3 & row-shared modules only \\
baRRR, col-shared & 4--5 & 4--5 & column-shared modules only \\
aRRR, one concat  & 1    & 1    & aRRR \citep{wang2023multiple} on $\bXdd$ \\
aRRR, four ind    & 6--9 & 6--9 & aRRR on each $\bX_{ij}$ \\
NNreg, one concat & 1    & --   & nuclear norm regression of $\bXdd$ on $\bYd$ \\
NNreg, four ind   & 6--9 & --   & nuclear norm regression of each $\bX_{ij}$ on $\bY_j$ \\
NNapp, one concat & --   & 1    & nuclear norm approximation of $\bXdd$ \\
NNapp, four ind   & --   & 6--9 & nuclear norm approximation of each $\bX_{ij}$ \\ \hline
\end{tabular}
\end{table}


The tabulated standard deviations (sd) in Table~\ref{paper3_tab: comparison}B represents variability of the imputation errors across different methodologies and missingness types. Similar values in each column (except sd=0 for RSE=1) indicate close standard deviations among different methods. Therefore, we only need to compare RSE for each method, which is shown in Table~\ref{paper3_tab: comparison}A. Less RSE indicates more accurate imputation. Our proposed method, baRRR, consistently outperforms all other methods across every type of missing data. Unsupervised methods, like BIDIFAC+ and nuclear norm approximation, are generally less accurate than supervised methods with identical module structures when columns are missing. The discrepancy in performance of all other methods sheds light on certain dependencies: full column recovery (i.e., signals for specific samples) is primarily based on covariate information, while the recovery of partially missing columns largely depends on row-shared structures. Row recovery, which refers to the signals of particular features, is influenced by both column-shared and global covariate effects as well as auxiliary structures. Notably, the performance gap between baRRR and BIDIFAC+, especially in the context of column missingness, underscores the pivotal role covariate effects play in data interpretation, particularly when columns are missing. Because a sample with all outcomes missing contributes no information to the auxiliary structures, its imputed values are predictions from its covariates alone; full column missingness therefore assesses prediction of outcomes for new samples.

\begin{table}[H]
\setlength{\tabcolsep}{2pt} 
\caption{Mean (\textbf{A}) and standard deviation (\textbf{B}) of imputation RSE over 100 replications for baRRR and 10 comparison methods (rows) under six types of missingness (columns); the last column pools all missing entries. About $4\%$ of $\bXdd$ is missing and each method runs for 100 epochs. Bold entries mark the smallest value in each column; values below $10^{-6}$ are shown as 0.}
\label{paper3_tab: comparison}
\begin{tabular}{lccccccc}
\hline
\multicolumn{1}{c}{\textbf{A}} & \multicolumn{1}{c}{entry} & \multicolumn{1}{c}{row}  & \multicolumn{1}{c}{partial col} & \multicolumn{1}{c}{full col} & \multicolumn{1}{c}{partial col} & \multicolumn{1}{c}{full col} & \multicolumn{1}{c}{all missing} \\
\multicolumn{1}{c}{} & \multicolumn{1}{c}{only}  & \multicolumn{1}{c}{only} & \multicolumn{1}{c}{only}        & \multicolumn{1}{c}{only}     & \multicolumn{1}{c}{and row}     & \multicolumn{1}{c}{and row}  & \multicolumn{1}{c}{together}    \\\hline
baRRR                & $\boldsymbol{0.108} $                    & $\boldsymbol{0.631} $                   & $\boldsymbol{0.374} $                          & $\boldsymbol{0.549}$                        & $\boldsymbol{0.732}$                           & $\boldsymbol{0.802} $                       & $\boldsymbol{0.420}$                            \\
BIDIFAC(+)           & 0.119                     & 0.654                    & 0.629                           & 1                            & 0.864                           & 1                            & 0.600                             \\
bRRR                 & 0.465                     & 0.776                    & 0.579                           & 0.613                        & 0.785                           & 0.813                        & 0.609                           \\
baRRR, row-shared    & 0.294                     & 1                        & 0.444                           & 0.574                        & 1                               & 1                            & 0.593                           \\
baRRR, col-shared    & 0.349                     & 0.728                    & 0.688                           & 0.689                        & 0.818                           & 0.842                        & 0.613                           \\
aRRR, one concat     & 0.398                     & 0.742                    & 0.544                           & 0.694                        & 0.811                           & 0.84                         & 0.596                           \\
aRRR, four ind       & 0.241                     & 1                        & 0.563                           & 0.574                        & 1                               & 1                            & 0.608                           \\
NNreg, one concat    & 0.637                     & 0.861                    & 0.663                           & 0.675                        & 0.854                           & 0.894                        & 0.710                            \\
NNreg, four ind      & 0.498                     & 1                        & 0.581                           & 0.592                        & 1                               & 1                            & 0.678                           \\
NNapp, one concat    & 0.488                     & 0.789                    & 0.772                           & 1                            & 0.948                           & 1                            & 0.760                            \\
NNapp, four ind      & 0.349                     & 1                        & 1                               & 1                            & 1                               & 1                            & 0.842                \\   \hline       
\end{tabular}

\vspace{20 pt}

\begin{tabular}{lccccccc}
\hline
\multicolumn{1}{c}{\textbf{B}} & \multicolumn{1}{c}{entry} & \multicolumn{1}{c}{row}  & \multicolumn{1}{c}{partial col} & \multicolumn{1}{c}{full col} & \multicolumn{1}{c}{partial col} & \multicolumn{1}{c}{full col} & \multicolumn{1}{c}{all missing} \\
\multicolumn{1}{c}{} & \multicolumn{1}{c}{only}  & \multicolumn{1}{c}{only} & \multicolumn{1}{c}{only}        & \multicolumn{1}{c}{only}     & \multicolumn{1}{c}{and row}     & \multicolumn{1}{c}{and row}  & \multicolumn{1}{c}{together}    \\ \hline
baRRR                & 0.01                      & 0.08                     & 0.06                            & 0.10                          & 0.22                            & 0.23                         & 0.03                            \\
BIDIFAC(+)           & 0.01                      & 0.06                     & 0.07                            & 0                            & 0.15                            & 0                            & 0.04                            \\
bRRR                 & 0.03                      & 0.08                     & 0.10                             & 0.12                         & 0.22                            & 0.24                         & 0.04                            \\
baRRR, row-shared    & 0.02                      & 0                        & 0.07                            & 0.09                         & 0                               & 0                            & 0.04                            \\
baRRR, col-shared    & 0.02                      & 0.09                     & 0.08                            & 0.09                         & 0.23                            & 0.28                         & 0.04                            \\
aRRR, one concat     & 0.02                      & 0.09                     & 0.09                            & 0.09                         & 0.33                            & 0.29                         & 0.04                            \\
aRRR, four ind       & 0.01                      & 0                        & 0.08                            & 0.09                         & 0                               & 0                            & 0.04                            \\
NNreg, one concat    & 0.03                      & 0.11                     & 0.10                             & 0.11                         & 0.27                            & 0.34                         & 0.04                            \\
NNreg, four ind      & 0.03                      & 0                        & 0.09                            & 0.11                         & 0                               & 0                            & 0.04                            \\
NNapp, one concat    & 0.03                      & 0.08                     & 0.10                             & 0                            & 0.31                            & 0                            & 0.04                            \\
NNapp, four ind      & 0.02                      & 0                        & 0                               & 0                            & 0                               & 0                            & 0.02           \\       \hline         
\end{tabular}

\end{table}
Given the flexibility of baRRR in recovering different module structures, we
further vary the signal levels $a_k$ and $b_l$ across six configurations. In
the standard configuration all equal 1. In each of the others, the
highlighted modules have standard deviation 2 and all remaining modules 0.5:
all covariate effects (high BY; $a_1,\dots,a_9$), the global auxiliary module
(high S global; $b_1$), the row-shared auxiliary modules (high S row;
$b_2,b_3$), the column-shared ones (high S column; $b_4,b_5$), or the
individual ones (high S individual; $b_6,\dots,b_9$).


We present the imputation relative squared error (RSE) for baRRR under various signal levels and missing proportions (\cref{paper3_tab:miss_rank1} when true rank of $\bBdk$ is 1 and \cref{paper3_tab:miss_rank3} when true rank of $\bBdk$ is 3).  The accuracy of missing imputation diminishes with an increasing overall missing rate. Interestingly, the performance remains relatively consistent for missing rates up to 20\%. The accuracy tends to decrease as the true rank of the covariate effect grows. Our method adeptly captures the signal for each structure when these structures exhibit even signals, especially more precise when there is a higher coefficient effect: in the high BY configuration at a 4\% missing rate, RSE for full column entries falls from 0.549 to 0.082 relative to the standard configuration. 
The signal of specific structures is captured effectively as long as it demonstrates stronger signals compared to other components. This is evident from the low RSE observed in partial column-only and row-only recovery when there exists a high global $\bS$ signal, partial column-only for a high column $\bS$ signal, and row-only for a high row $\bS$ signal. Such findings underscore the adaptability of our method, showcasing its ability to efficiently handle missing imputation across a spectrum of signal scenarios.

\begin{table}[H]
\setlength{\tabcolsep}{2pt} 
\caption{Mean imputation RSE of baRRR over 100 replications, by signal configuration and missing rate, when each $\bBdk$ has true rank 1.}
\label{paper3_tab:miss_rank1}
\centering
\begin{tabular}{lcccccccc}
\hline
Signal        & \multicolumn{1}{c}{Missing}     & \multicolumn{1}{c}{entry} &  \multicolumn{1}{c}{full col} & \multicolumn{1}{c}{full col} & \multicolumn{1}{c}{partial col} & \multicolumn{1}{c}{partial col} & 
\multicolumn{1}{c}{row}  &\multicolumn{1}{c}{all missing} \\
levels        & \multicolumn{1}{c}{rate} & \multicolumn{1}{c}{only}  &  \multicolumn{1}{c}{only}        & \multicolumn{1}{c}{and row}     & \multicolumn{1}{c}{only}     & \multicolumn{1}{c}{and row}  & 
\multicolumn{1}{c}{only} &\multicolumn{1}{c}{together}    \\ \hline
standard      & 0.04                            & 0.108                     & 0.549                    & 0.802                           & 0.374                        & 0.732                           & 0.631                        & 0.420                           \\
              & 0.08                            & 0.115                     & 0.530                    & 0.777                           & 0.378                        & 0.716                           & 0.627                        & 0.411                           \\
              & 0.2                             & 0.138                     & 0.546                    & 0.771                           & 0.397                        & 0.726                           & 0.649                        & 0.436                           \\
              & 0.4                             & 0.196                     & 0.572                    & 0.797                           & 0.451                        & 0.764                           & 0.689                        & 0.480                           \\
              & 0.8                             & 0.464                     & 0.706                    & 0.868                           & 0.661                        & 0.860                           & 0.829                        & 0.666                           \\ \hline
high BY       & 0.04                            & 0.035                     & 0.082                    & 0.583                           & 0.064                        & 0.548                           & 0.535                        & 0.189                           \\
              & 0.08                            & 0.037                     & 0.076                    & 0.543                           & 0.065                        & 0.531                           & 0.532                        & 0.177                           \\
              & 0.2                             & 0.042                     & 0.085                    & 0.530                           & 0.071                        & 0.531                           & 0.535                        & 0.190                           \\
              & 0.4                             & 0.054                     & 0.120                    & 0.556                           & 0.101                        & 0.564                           & 0.549                        & 0.210                           \\
              & 0.8                             & 0.109                     & 0.264                    & 0.619                           & 0.229                        & 0.617                           & 0.603                        & 0.307                           \\ \hline
high S  & 0.04                            & 0.108                     & 0.846                    & 0.935                           & 0.222                        & 0.437                           & 0.330                        & 0.371                           \\
   global           & 0.08                            & 0.115                     & 0.844                    & 0.925                           & 0.230                        & 0.408                           & 0.328                        & 0.379                           \\
              & 0.2                             & 0.134                     & 0.859                    & 0.931                           & 0.250                        & 0.437                           & 0.357                        & 0.403                           \\
              & 0.4                             & 0.178                     & 0.874                    & 0.946                           & 0.308                        & 0.506                           & 0.409                        & 0.452                           \\
              & 0.8                             & 0.387                     & 0.939                    & 0.977                           & 0.590                        & 0.763                           & 0.650                        & 0.653                           \\ \hline
high S     & 0.04                            & 0.115                     & 0.856                    & 0.936                           & 0.249                        & 0.917                           & 0.905                        & 0.550                            \\
 column             & 0.08                            & 0.122                     & 0.849                    & 0.932                           & 0.253                        & 0.920                           & 0.905                        & 0.530                            \\
              & 0.2                             & 0.142                     & 0.857                    & 0.929                           & 0.280                        & 0.927                           & 0.915                        & 0.552                           \\
              & 0.4                             & 0.190                     & 0.875                    & 0.945                           & 0.348                        & 0.943                           & 0.931                        & 0.585                           \\
              & 0.8                             & 0.414                     & 0.940                    & 0.977                           & 0.661                        & 0.976                           & 0.972                        & 0.742                           \\ \hline
high S     & 0.04                            & 0.117                     & 0.853                    & 0.928                           & 0.815                        & 0.920                           & 0.360                        & 0.528                           \\
  row            & 0.08                            & 0.123                     & 0.848                    & 0.922                           & 0.824                        & 0.923                           & 0.359                        & 0.541                           \\
              & 0.2                             & 0.143                     & 0.860                    & 0.932                           & 0.838                        & 0.932                           & 0.391                        & 0.561                           \\
              & 0.4                             & 0.190                     & 0.874                    & 0.945                           & 0.860                        & 0.943                           & 0.454                        & 0.605                           \\
              & 0.8                             & 0.407                     & 0.939                    & 0.976                           & 0.935                        & 0.975                           & 0.700                        & 0.755                           \\ \hline
high S     & 0.04                            & 0.126                     & 0.854                    & 0.950                           & 0.821                        & 0.916                           & 0.908                        & 0.690                            \\
 individual             & 0.08                            & 0.132                     & 0.850                    & 0.931                           & 0.824                        & 0.923                           & 0.904                        & 0.675                           \\
              & 0.2                             & 0.154                     & 0.863                    & 0.934                           & 0.836                        & 0.927                           & 0.918                        & 0.692                           \\
              & 0.4                             & 0.203                     & 0.876                    & 0.945                           & 0.862                        & 0.944                           & 0.932                        & 0.710                           \\
              & 0.8                             & 0.430                     & 0.939                    & 0.977                           & 0.935                        & 0.976                           & 0.972                        & 0.806                          \\\hline
\end{tabular}
\end{table}

\section{Application}
\label{paper3_sec:application}
\subsection{Cancer type, mutation and other data details}
\label{paper3_subsec:names}


We use the TCGA pan-cancer data of \cite{hoadley2018cell}. The outcomes comprise four omics sources: batch-corrected mRNA expression for the $p_1 = 1000$ genes with the highest standard deviation among 20531, batch-corrected miRNA-Seq data for $p_2 = 743$ miRNAs, DNA methylation for the $p_3 = 1000$ CpG sites with the highest standard deviation among 22601, and batch-corrected reverse-phase protein array data for $p_4 = 198$ proteins. The mRNA and miRNA data are log-transformed. Each feature is centered to mean 0 across all samples, and each source $\bX_{i\boldsymbol{\cdot}}$ is scaled to unit standard deviation to put the sources on a common scale. As mutation covariates we use binary indicators for the 50 most frequently mutated genes, coded 1 if a somatic mutation was called in that gene for a given tumor and 0 otherwise. Matching samples across all data types leaves $n = 6165$ tumors from $J = 27$ cancer types, summarized in \cref{paper3_append_caner_names}. Overall, 2.9\% of $\bXdd$ is missing, in both partial rows and partial columns.

\subsection{Decomposition results}

We use $H=2$ covariate views. The first holds the 27 cancer-type indicators and enters as a single module shared across all sources and cohorts, capturing baseline differences between cancer types. The second holds the 50 mutation indicators and enters through $K_2=50$ modules. The auxiliary structures use the same 50 modules, so that within each module baRRR separates mutation-driven from mutation-unrelated variation. The modules are the 50 combinations of sources and cohorts with the largest variation found by BIDIFAC+ \citep{Lock2022BADIFACplus}, shown in
Figure~\ref{paper3_fig:heatmap_module}. The fitted model is \begin{align}\label{newmodel}
    \bXdd &= {\mathbf{B}_{\boldsymbol{\cdot}1}^{}} \bY_{1\boldsymbol{\cdot}}^{} +  \sum^{50}_{k=1} {\mathbf{B}_{\boldsymbol{\cdot}2}^{(k)}} \bY_{2\boldsymbol{\cdot}}^{(k)} + \sum^{50}_{k=1}  {\mathbf{S}_{\boldsymbol{\cdot\cdot}}^{(k)}} + \bEdd,
    \end{align}
where $\bY_{1\boldsymbol{\cdot}}$ has 27 indicator features, with 1/0 indicating whether a sample belongs to the current cancer cohort. We orthogonalize $\bY_{1\boldsymbol{\cdot}},\bY_{2\boldsymbol{\cdot}}^{(1)},...,\bY_{2\boldsymbol{\cdot}}^{(50)}$ before optimization.

\begin{figure}[H] 
\caption{The 50 modules selected by BIDIFAC+ and used for both the mutation-driven and auxiliary structures. Rows are modules and columns the 27 cancer types; a colored cell means the cancer type belongs to the module, and the colors show which omics sources the module spans.}
\includegraphics[width=\textwidth,height=16cm,keepaspectratio]{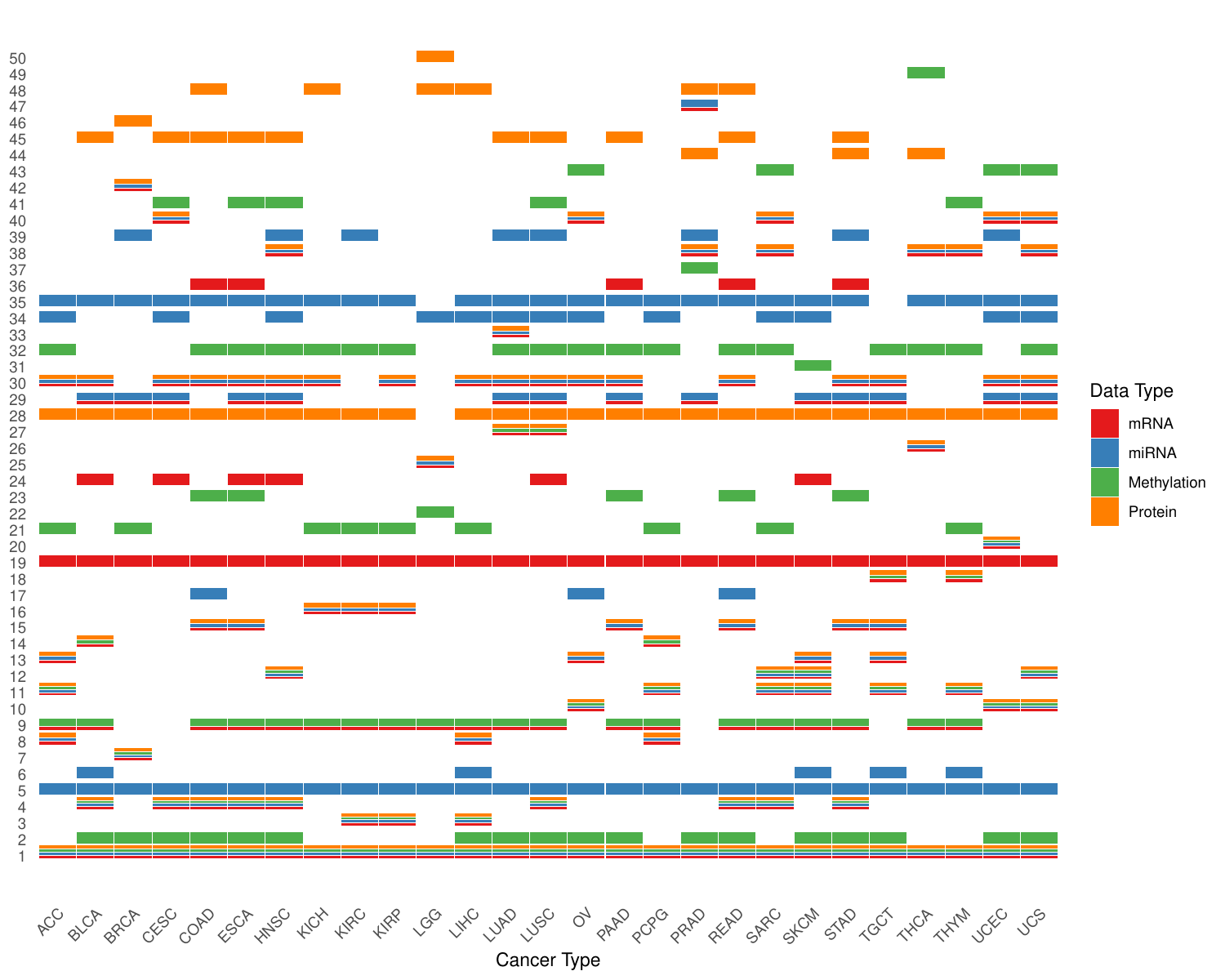}
\label{paper3_fig:heatmap_module}   
\end{figure}

Since our proposed baRRR method accounts for missing data, we apply  \cref{ALSUV_barrr} with rank upper bound 20 (i.e.$r_{B,upper} = r_{S,upper} = 20$) directly to estimate ${\mathbf{B}_{\boldsymbol{\cdot}1}^{}},{\mathbf{B}_{\boldsymbol{\cdot}2}^{(1)}},...,{\mathbf{B}_{\boldsymbol{\cdot}2}^{(50)}}, {\mathbf{S}_{\boldsymbol{\cdot\cdot}}^{(1)}},...,{\mathbf{S}_{\boldsymbol{\cdot\cdot}}^{(50)}} $ while simultaneously imputing the missing values. As illustrated in \cref{paper3_fig:scatter_signal}, the scatter plots comparing the original data matrix $\bXdd$ (with missing values filled with 0) and the total recovered signals $\widehat{\bXdd}$ demonstrate striking similarities. This resemblance suggests that our baRRR approach has adeptly captured the principal signals in the TCGA data.  We assess the extent of the signal captured using the relative unexplained variance (RUV), defined as a function of $\bA$, i.e. $RUV(\bA) = \frac{||\bXdd - \widehat{\bXdd} + \hat{\bA}||_F^2}{||\bXdd||_F^2}$, where $\hat{\bA}$ symbolizes any estimated component of the model. The purpose of RUV is to determine the remaining variation when $\hat{\bA}$ is left out of the model. With an RUV of 0.32 for the baRRR model (assuming $\hat{\bA} = \boldsymbol{0}$), it's evident that the model successfully encapsulates the primary signals inherent in the original data. When key structures such as ${\mathbf{B}_{\boldsymbol{\cdot}1}^{}} \bY_{1\boldsymbol{\cdot}}^{}, \sum^{50}_{k=1} {\mathbf{B}_{\boldsymbol{\cdot}2}^{(k)}} \bY_{2\boldsymbol{\cdot}}^{(k)}, \sum^{50}_{k=1}  {\mathbf{S}_{\boldsymbol{\cdot\cdot}}^{(k)}}$ are omitted, the RUV increases to 0.49, 0.35 and 0.72 respectively, underscoring that augmented structures explain the most variation followed by cohort effects and then mutation effects. 

Table~\ref{paper3_tab:top15mods} lists the 15 modules explaining the most
variation in the baRRR fit. The three largest are a global module spanning
all cancer types and sources, a methylation module shared by 18 cancer
types, and a module shared by the kidney cancers KIRC and KIRP together with
liver cancer (LIHC). Mutation-driven variation is small relative to
auxiliary structure in most modules, but its share varies sharply with the
cancer types a module spans. The broad modules shared by all or nearly all
cancer types (modules 1, 2, 5 and 9) carry less than 0.5\% of their
variation in $\bB\bY$. The modules with substantial mutation-driven shares
each span a small group of three to six cancer types: 47\% for module 15
(gastrointestinal cancers and TGCT), 37\% for module 11, 21\% for module 10
(OV, UCEC, UCS) and 9\% for module 13, although not every such module shows
this (modules 3 and 7 carry almost none). In these data, somatic mutations
thus act mainly on structure specific to groups of cancer types rather than
on pan-cancer structure, extending to four omics sources the finding of
\cite{wang2023multiple} for gene expression. The dominance of auxiliary
variation also shows why it must be modeled: without it, that variation
would be left in the error term or could be absorbed into the mutation
effects.

Our estimates pave the way for biologically significant discoveries. For example, principal components plots of module 10 (OV, UCEC, UCS) are shown in the lower panels of Figure~\ref{paper3_fig:scatter_signal}. These illustrate the mutation-associated variation for a shared effect between ovarian and uterine cancers, with module 10's first two principle components categorized based on the presence of mutations in either TP53 or PTEN genes. It's well-documented that TP53 mutations are common and correlate with a worse prognosis in ovarian and uterine cancers. Additionally, PTEN mutations, particularly prevalent in UCEC, are connected to an increased rate of mutations in cancers that are medium to poorly differentiated, notably in younger women \citep{tao2020pten}.

\begin{figure}[!ht]
    \centering   \includegraphics[height=5.5cm,width=7cm]{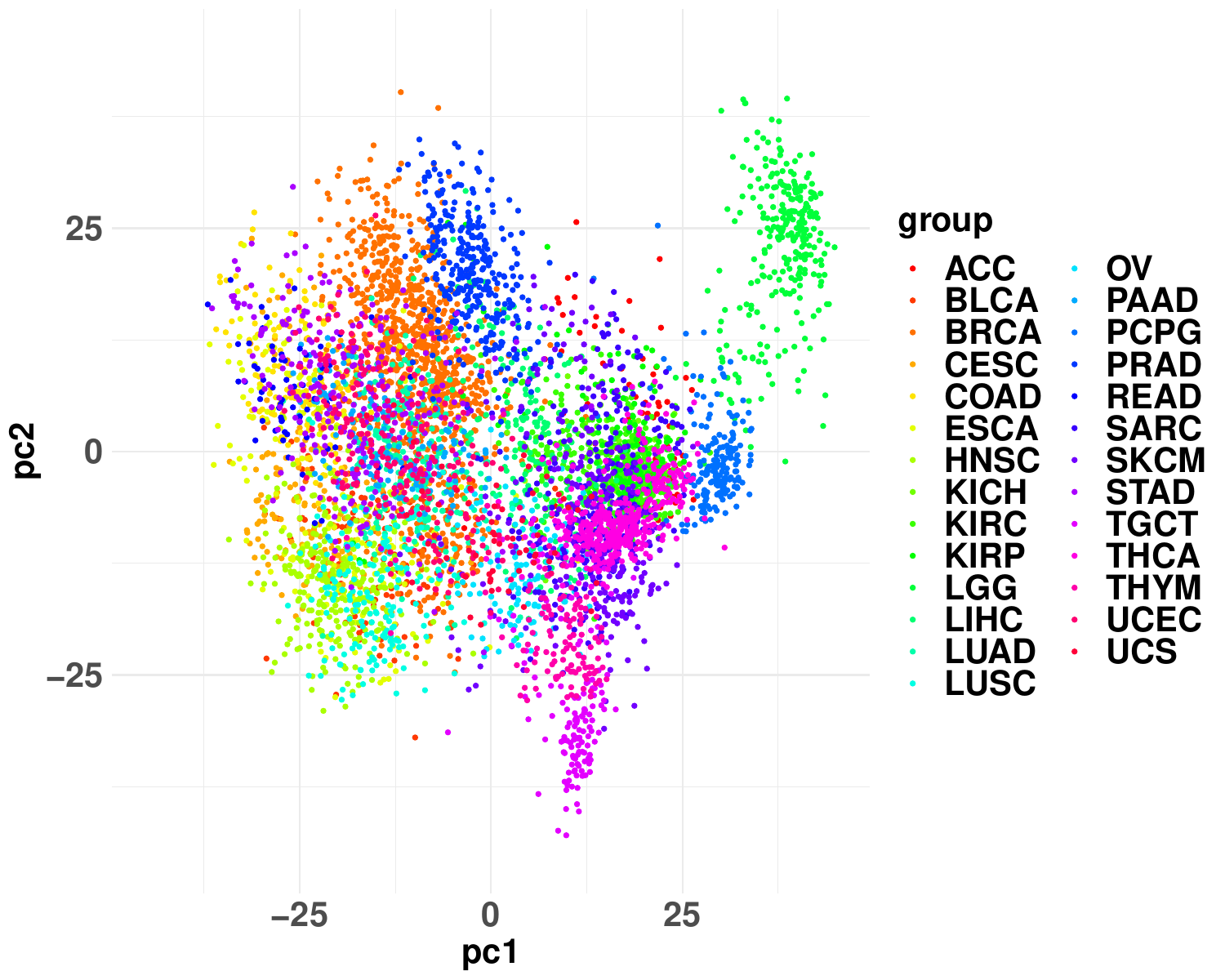}
    \includegraphics[height=5.5cm,width=7cm]{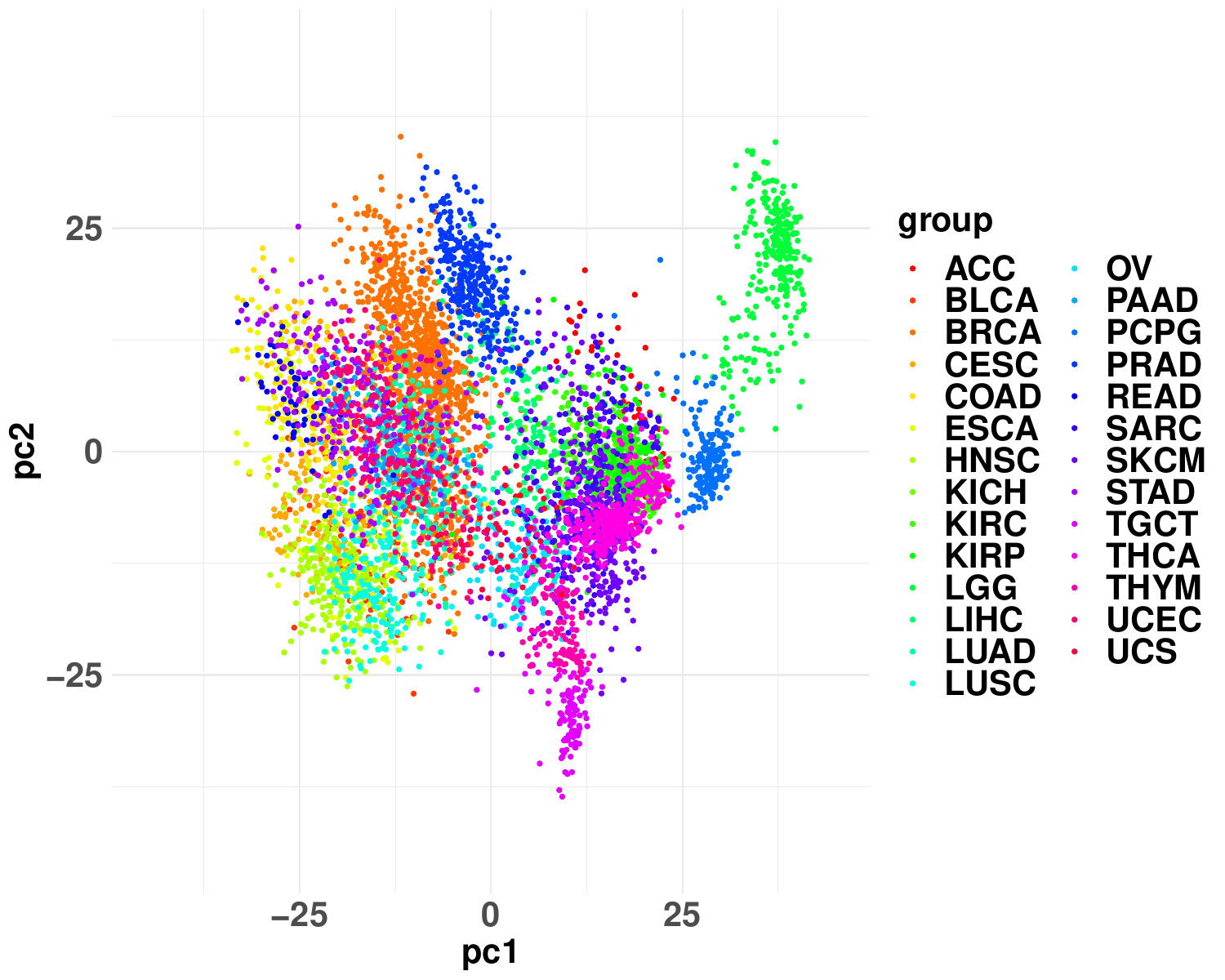}\\
    \includegraphics[height=5.5cm,width=7cm]{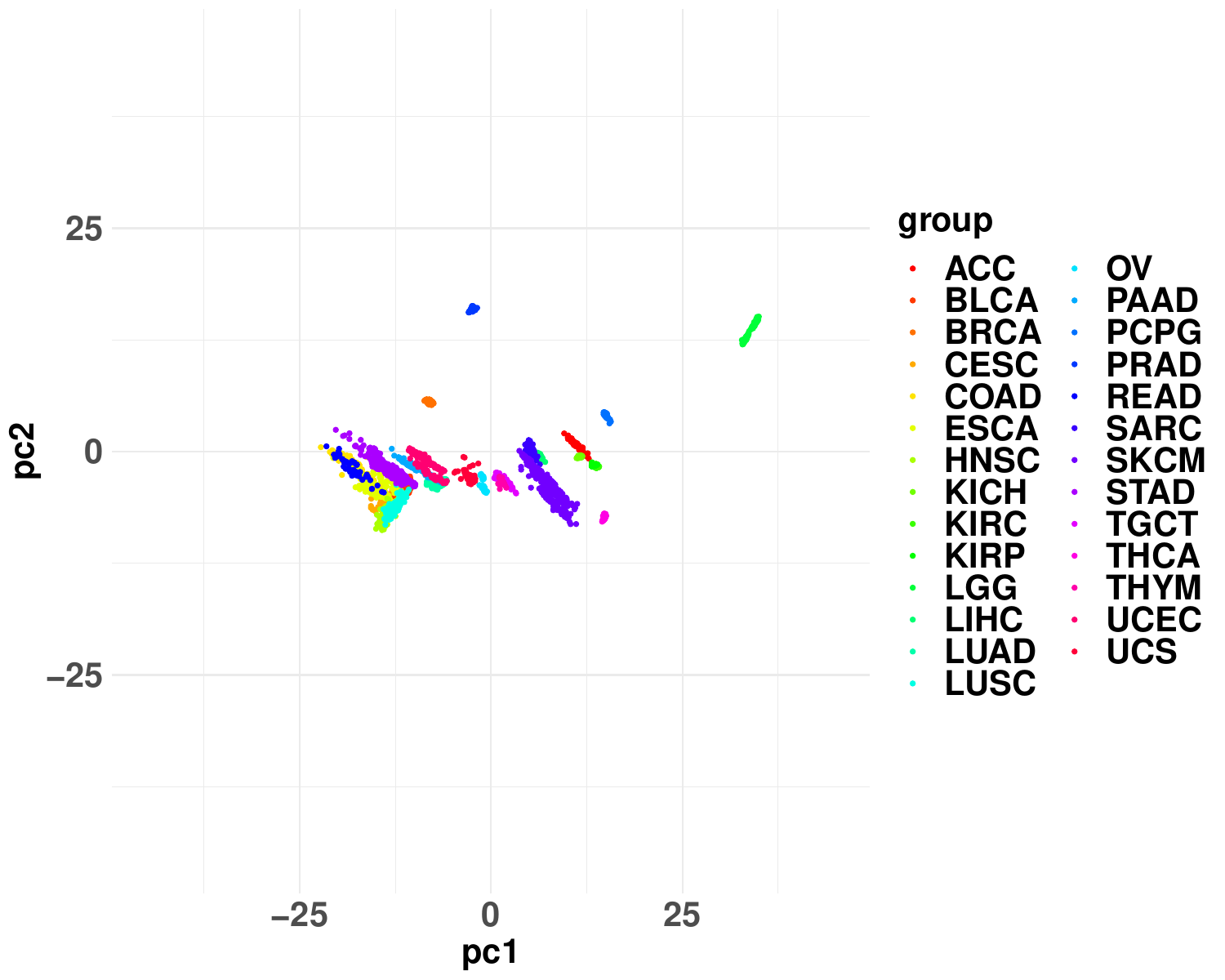}
    \includegraphics[height=5.5cm,width=7cm]{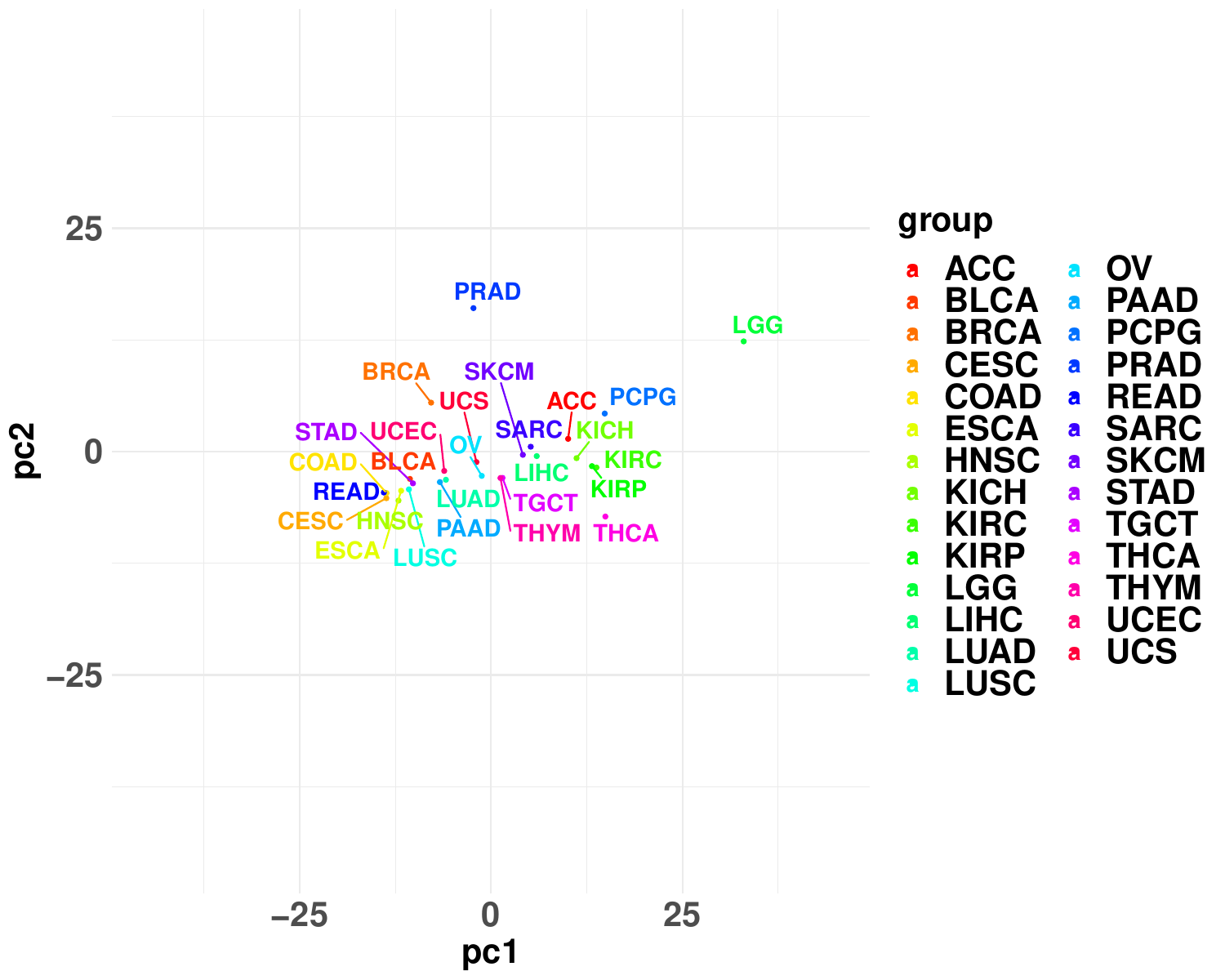}\\
    \includegraphics[height=5.5cm,width=7cm]{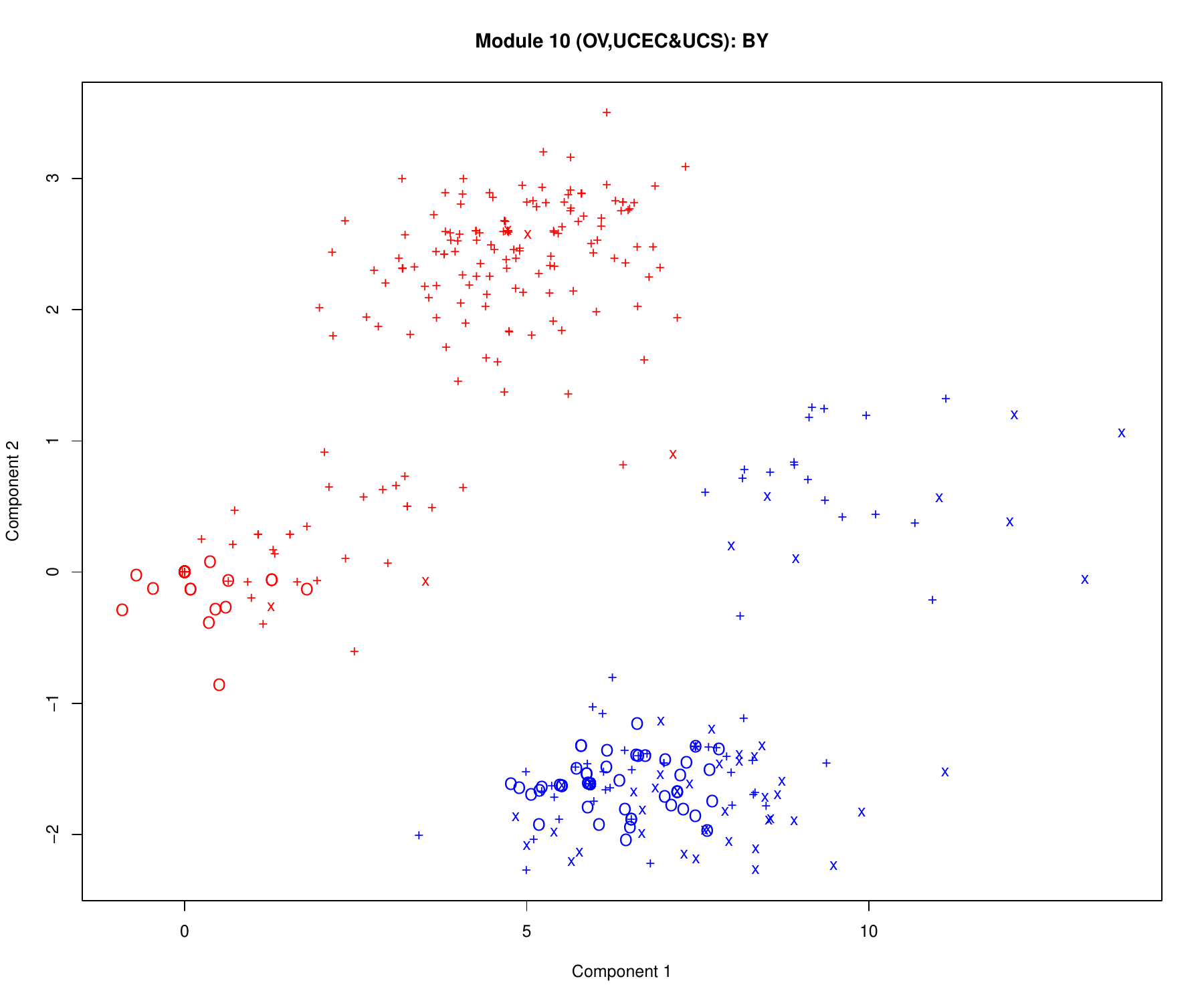}
    \includegraphics[height=5.5cm,width=7cm]{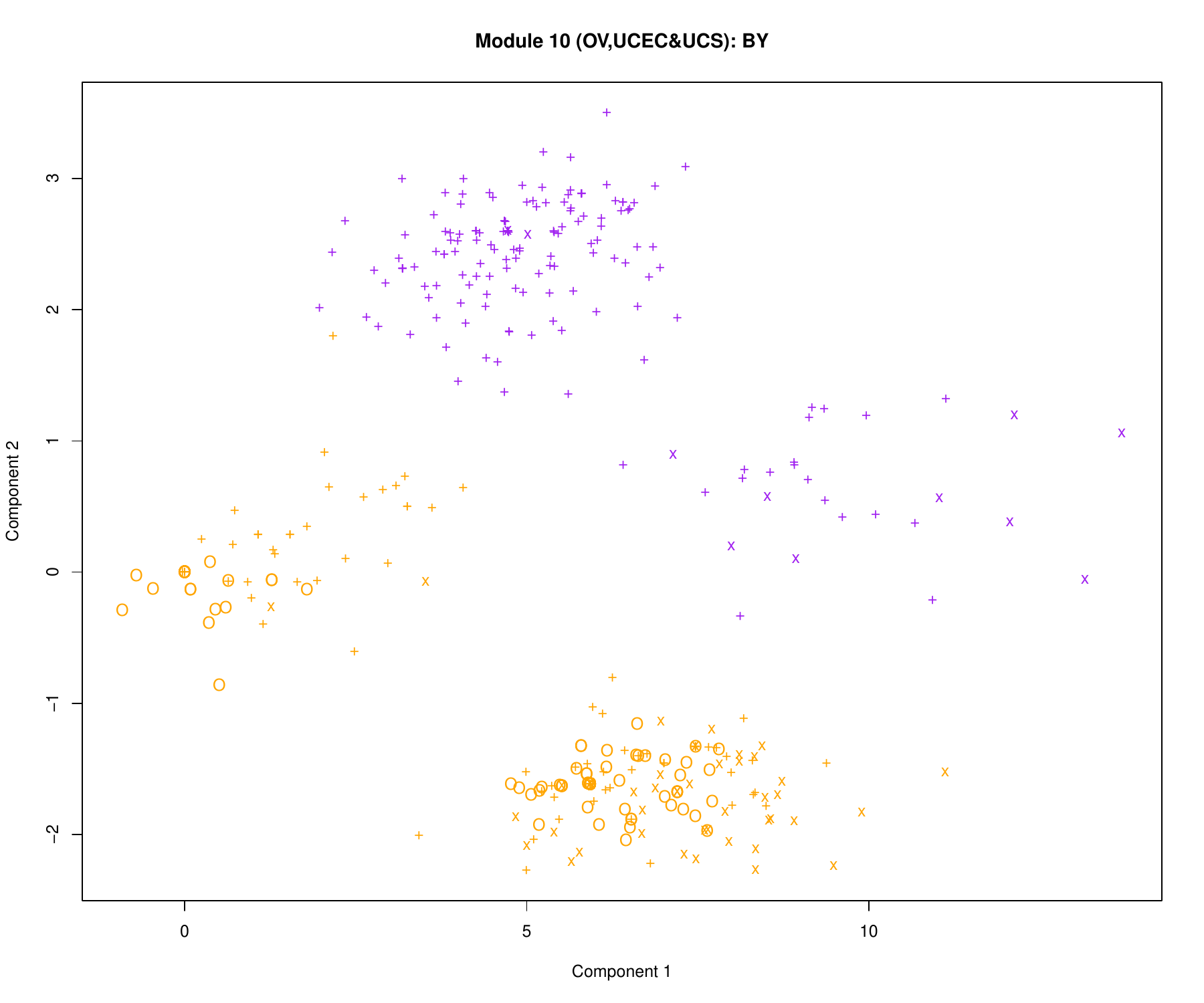}
    \caption{$\bold{Upper-left}$: first two principal component scores (PCs) of original data $\bXdd$. $\bold{Upper-right}$:  first two PCs of total signals $\widehat{\bXdd}$ recovered by baRRR. Each color represents a cancer cohort, while each dot represents a sample. $\bold{Middle-left}$: first two PCs of mutation effects plus cohort effects $\bB_{\cdot1}\bY_{1\cdot} + \sum_{k=1}^{50} \bB_{\cdot2}^{(k)}\bY_{2\cdot}^{(k)}$. $\bold{Middle-right}$: first two PCs of cohort effects only $\bB_{\cdot1}\bY_{1\cdot}$ recovered by baRRR. Each color represents a cancer cohort, while each dot represents one feature.  $\bold{Lower-left}$: first two PCs of covariate-related variation ($\bB \bY)$ from Module 10 (OV,UCEC,UCS), with colors showing TP53 mutation (red means without and blue means with TP53 mutation).  $\bold{Lower-right}$: the first two PCs of covariate-related variation ($\bB \bY)$ from Module 10 (OV,UCEC,UCS), with color showing PTEN mutation (orange: without, purple: with) and symbols cancer type (``O": OV, ``+": UCEC, ``x": UCS. }
\label{paper3_fig:scatter_signal}
\end{figure}

\begin{table}[!h]
\caption{Cancer types, molecular modalities and sources for the top 15 modules, ordered by variation explained by baRRR.}
\label{paper3_tab:top15mods}
\setlength{\tabcolsep}{2pt} 
\begin{tabular}{cccccc}
\hline
Module & Variance  & Variance  & Cancer  & Sample  \\
 & of BY & of S &  types &  size \\
\hline
1      & 368           & 221163        & All cancers                                                & 6165   \\
2      & 77            & 176924        & All cancers but *ACC,KICH,KIRC,KIRP, & 4220   \\
       &               &               & LGG,PCPG,SARC,THCA,THYM* &    \\
3      & 64            & 135178        & KIRC,KIRP,LIHC                                             & 771    \\
4      & 3069          & 126379        & BLCA,CESC,COAD,READ,ESCA,HNSC, & 1694   \\
       &               &               & LUSC,SARC,STAD &     \\
5      & 0             & 127126        & All cancers                                                & 6165   \\
6      & 1557          & 119073        & BLCA,LIHC,SKCM,TGCT,THYM                                   & 918    \\
7      & 0             & 106231        & BRCA                                                       & 976    \\
8      & 921           & 88035         & ACC,LIHC,PCPG                                              & 451    \\
9      & 347           & 87474         & All cancers but *BRCA,CESC,OV,PRAD, & 4118   \\
       &               &               & TGCT,UCEC,UCS* &     \\
10     & 15628         & 58334         & OV,UCEC,UCS                                                & 414    \\
11     & 26083         & 44557         & ACC,PCPG,SARC,SKCM,TGCT,THYM                               & 1095   \\
12     & 695           & 62386         & HNSC,SARC,SKCM,UCS                                         & 923    \\
13     & 5592          & 56809         & ACC,OV,SKCM,TGCT                                           & 667    \\
14     & 2082          & 56126         & BLCA,PCPG                                                  & 308    \\
15     & 25915         & 29106         & COAD,READ,ESCA,PAAD,STAD,TGCT                              & 953    \\
\hline
\end{tabular}

\vspace{2cm}

\begin{tabular}{cccc}
\hline
Module & Molecular modalities & Feature size \\
\hline
1      & mRNA,miRNA,Methylation,Protein & 2941  \\
2      & Methylation                    & 1000  \\
3      & mRNA,miRNA,Methylation,Protein & 2941  \\
4      & mRNA,miRNA,Methylation,Protein & 2941  \\
5      & miRNA                          & 743   \\
6      & miRNA                          & 743   \\
7      & mRNA,miRNA,Methylation,Protein & 2941  \\
8      & mRNA,miRNA,Protein             & 1941  \\
9      & mRNA,Methylation               & 2000  \\
10     & mRNA,miRNA,Methylation,Protein & 2941  \\
11     & mRNA,miRNA,Methylation,Protein & 2941  \\
12     & mRNA,miRNA,Methylation,Protein & 2941  \\
13     & mRNA,miRNA,Protein             & 1941  \\
14     & mRNA,Methylation,Protein       & 2198  \\
15     & mRNA,miRNA,Protein             & 1941  \\
\hline
\end{tabular}
\end{table}

\subsection{Missing Data Imputation}


We assess imputation on the real data by adding missingness to the 2.9\% already present, using the four schemes (full column, partial column, row, and random entry missing) of Section~\ref{paper3_subsec:sim-missing}, each targeting an additional 3\% of entries. Because new entries can overlap with each other and with the existing missing data, total missingness is about 14.5\%. RSE is computed over the artificially removed entries only, and results are averaged over 10 independent missingness patterns.

We compare baRRR with ten alternatives that drop or merge parts of its
structure (Table~\ref{paper3_tab:methods_real}). Stacked methods concatenate
the four omics sources into a single view, so that modules are defined over
cancer types only; ``27 ind'' denotes one separate module per cancer type.

\begin{table}[H]
\centering
\small
\caption{Methods compared on the TCGA data. Entries give the number of
modules for the mutation effect, the cohort effect (the cancer-type
indicator module $\bB_{\boldsymbol{\cdot}1}\bY_{1\boldsymbol{\cdot}}$ of
model~\eqref{newmodel}) and the auxiliary structure; ``--'' means the term
is omitted.}
\label{paper3_tab:methods_real}
\begin{tabular}{lcccc}
\hline
Method & Mutation & Cohort & Auxiliary & Sources \\
       & effect   & effect & structure &         \\ \hline
baRRR                    & 50          & 1  & 50          & separate \\
BIDIFAC+                 & --          & -- & 50          & separate \\
bRRR                     & 50          & 1  & --          & separate \\
baRRR w/o cohort         & 50          & -- & 50          & separate \\
maRRR                    & 50          & -- & 50          & stacked \\
aRRR                     & 1 (global)  & 1  & 1 (global)  & stacked \\
aRRR, 27 ind             & 27 (ind)    & 1  & 27 (ind)    & stacked \\
NNapp, 27 ind            & --          & 1  & 27 (ind)    & stacked \\
NNapp, 27 ind w/o cohort & --          & -- & 27 (ind)    & stacked \\
NNreg, 27 ind            & 27 (ind)    & 1  & --          & stacked \\
NNreg, 27 ind w/o cohort & 27 (ind)    & -- & --          & stacked \\ \hline
\end{tabular}
\end{table}

As shown in \cref{paper3_tab:miss_real_data} baRRR achieved the lowest or nearly lowest RSE in all six types of missing data structures and the lowest RSE overall. Full/partial column and row cases have less frequency since they are intersections of column and row missingness, leading to more variation. Excluding the cohort effect from the baRRR model significantly increased the RSE in the full column missing scenario, indicating that different cohorts contain unique information. The performance of maRRR (i.e., further ignoring information in different views) suggests that essential information is captured through modules from different cohorts. BIDIFAC+ indicates that missing covariates will lose view-specific information, while mRRR shows that row and entry-wise missing data recovery primarily relies on augmented structures. aRRR demonstrates that many signals can be merged for imputation. The 27 individual nuclear norm regression or approximation shows significant information loss if shared effects between different views or cohorts are not considered.

The computation time for the above methods on a standard computer is recorded in the parenthesis after each method name. For methods without a detailed time, they take less than 10 minutes. The more modules included in the model, the more computation time is required.

\begin{table}[H]
\setlength{\tabcolsep}{2pt}
\caption{Mean imputation RSE over 10 replications on the TCGA data, with
about 14.5\% of entries missing, by method (rows) and class of missing entry
(columns). Computation time on a standard computer is in parentheses;
methods without a time ran in under 10 minutes. Bold entries mark the
smallest value in each column.}
\label{paper3_tab:miss_real_data}
\begin{tabular}{lccccccc}
\hline
Method & entry & full col & full col & partial col & partial col & row & all missing \\
(computation time) & only & only & and row & only & and row & only & together \\ \hline
baRRR (8.6 hrs)            & 0.352 & 0.740 & 0.854 & 0.523 & 0.753 & 0.608 & $\boldsymbol{0.557}$ \\
BIDIFAC+ (2.5 hrs)         & $\boldsymbol{0.351}$ & 1.000 & 1.000 & 0.574 & 0.827 & 0.662 & 0.648 \\
bRRR (6.0 hrs)             & 0.600 & 0.614 & 0.888 & 0.620 & 0.892 & 0.889 & 0.680 \\
baRRR w/o cohort (7.6 hrs) & 0.353 & 0.900 & 0.961 & 0.554 & 0.811 & 0.650 & 0.615 \\
maRRR (8.0 hrs)            & 0.357 & 0.905 & 0.967 & $\boldsymbol{0.467}$ & 0.786 & 0.679 & 0.603 \\
aRRR                       & 0.428 & 0.728 & $\boldsymbol{0.836}$ & 0.531 & $\boldsymbol{0.701}$ & $\boldsymbol{0.596}$ & 0.572 \\
aRRR, 27 ind               & 0.400 & 0.741 & 0.844 & 0.473 & 0.860 & 0.852 & 0.614 \\
NNapp, 27 ind              & 0.401 & 1.000 & 1.000 & 0.479 & 1.000 & 1.000 & 0.717 \\
NNapp, 27 ind w/o cohort   & 0.731 & 1.001 & 1.003 & 0.813 & 1.004 & 1.005 & 0.886 \\
NNreg, 27 ind              & 0.572 & $\boldsymbol{0.607}$ & 0.887 & 0.589 & 0.896 & 0.897 & 0.665 \\
NNreg, 27 ind w/o cohort   & 0.687 & 0.749 & 1.002 & 0.694 & 1.027 & 1.040 & 0.791 \\ \hline
\end{tabular}
\end{table}

\section{Discussion}

We have proposed baRRR, which integrates bidimensionally linked data by decomposing variation into covariate-driven and auxiliary low-rank modules, each of which may be shared across any subset of sources and cohorts. The framework contains maRRR and BIDIFAC+ as special cases, and its algorithms impute entrywise and blockwise missing data as part of the fit. In simulations, baRRR recovered both shared and individual-level signal and had the lowest imputation error for every class of missing entry, including samples with all outcomes missing, where imputation amounts to predicting outcomes from covariates. In the TCGA application, auxiliary structure accounted for most of the variation, while mutation-driven variation was concentrated in modules spanning small groups of cancer types, such as the module shared by ovarian and uterine cancers, rather than in pan-cancer structure.

Several limitations point to future work. First, the module structure must be specified in advance,  so it would be hard for larger views or cohorts since there can be a lot of combinations; we took it from BIDIFAC+ applied to the same data, so module-level interpretations are conditional on that selection, and a procedure that selects modules jointly with the covariate effects would be preferable. Second, our penalty weights are fixed by a random matrix argument and an empirical Bayes approach in the spirit of \cite{lock2024empirical} could instead estimate module-specific shrinkage from the data. Third, on the TCGA data the gain in imputation accuracy over simpler special cases such as aRRR \citep{wang2023multiple} was small relative to the added computation, so baRRR's main advantage there lies in its interpretable decomposition; faster algorithms, for example based on randomized singular value decompositions, would help in larger studies. Fourth, our imputation analyses assume values are missing completely at random, which may not hold when entire platforms are missing for some samples.

Other directions include formal inference on which mutations drive each module, statistical error bounds and convergence rates for the estimator, and nonlinear extensions. baRRR models both covariate effects and auxiliary structure as linear and low rank; deep learning approaches to multi-view integration \citep{andrew2013deep,Wang2021DeepIDA,wang2024interpretable} capture nonlinear relationships, in some cases while preserving interpretability through feature ranking, and incorporating such components into the module framework is a promising extension. Omics data are also increasingly collected over additional dimensions such as time or tissue. Extending baRRR to linked tensors would build on tensor decomposition methods \citep{kolda2009tensor} and their multi-source extensions \citep{kim2023bayesian,kang2025multiple}, and treating repeated measurements over time as functional data \citep{ramsay2005functional} would draw on recent work on functional discriminant analysis \citep{liu2025multivariate}.


\section*{Acknowledgements}

This work was supported by the National Institutes of Health (NIH) grants R01-GM130622 and R35-GM161467. 

\section*{Availability}

Code to perform the baRRR method is available at \url{https://github.com/JiuzhouW/baRRR}. \newpage

\def\spacingset#1{\renewcommand{\baselinestretch}%
{#1}\small\normalsize} \spacingset{0.5}

\bibliographystyle{Chicago}
\bibliography{Bibliography}

\newpage

\section*{Appendix}

\begin{appendices}

\section{Optimization Algorithms}
\label{paper3_opt_alg}
\begin{algorithm}
    \caption{Alternating least squares with $L_2$ penalties for objective~\eqref{paper3_ObjUV}} \label{ALSUV_barrr}
  \begin{algorithmic}[1]
    \INPUT Covariates $\bYdd$ and corresponding multivariate outcomes $\bXdd$; penalizing terms $\lambda_{Bh}, \lambda_S$; binary indicator matrices $\bR_{Bh},\bC_{Yh}, \bC_S,\bR_S$; rank upper bounds $r_{B,upper}$ and $r_{S,upper}$
    \OUTPUT $\bB,\bS$
    \State \textbf{Initialization} Construct $\{\bYhdk\}$ based on $\bC_{Yh}$. Assign initialized numbers for each entry of $\{\bU_{Bh}^{(k)},\bV_{Bh}^{(k)}\}^K_{k=1}, \{\bU_S^{(l)},\bV_S^{(l)}\}^L_{l=1}$ with the number of columns/rows determined by $r_{B,upper}$ and $r_{S,upper}$
    \While{convergence criterion does not meet}
    \For{$h=1,...,H$}
    \For{$k=1,...,K$}
      \State Compute residual matrix $\bXdd^{(hk)} = \bXdd - \sum^H_{h'=1,h'\neq h} \sum^{K_h}_{k'=1,k'\neq k} \bU_{Bh'}^{(k')}\bV_{Bh'}^{(k')T} \mathbf{Y}_{h'\boldsymbol{\cdot}}^{(k')} - \sum^{L}_{l=1}  \bU_S^{(l)}\bV_S^{(l)T}$
      \State Set $\bX_{ij}^{(hk)} = \mathbf{0}$ where $\bC_{Yh}[j,l]=0$ for $j=1,...,J$ or $\bR_{Bh}[i,l]=0$ for $i=1,...,I$
      \State Update $\bU_{Bh}^{(k)} = \bXdd^{(hk)}\bYhdkT\bV_{Bh}^{(k)}(\bV_{Bh}^{(k)T}\bYhdk\bYhdkT\bV_{Bh}^{(k)} + \lambda_B^{(k)}\bI_{r_B})^{-1}$
      \State Update $vec(\bV_{Bh}^{(k)}) = [(\bU_{Bh}^{(k)T}\bU_{Bh}^{(k)}) \bigotimes (\bYhdk\bYhdkT) + \lambda_B^{(k)}\bI_{q * r_B} ]^{-1} vec[\bYhdk(\bXdd^{(hk)T})\bU_{Bh}^{(k)}]$
      \State Transform $vec(\bV_{Bh}^{(k)})$ to $\bV_{Bh}^{(k)}$
      \EndFor
      \EndFor
     \For{$l=1,..,L$}
      \State Compute the residual matrix $\bXddl = \bXdd - \sum^H_{h=1} \sum^{K_h}_{k=1} \bU_{Bh}^{(k)}\bV_{Bh}^{(k)T} \bYhdk - \sum^{L}_{l'=1,l'\neq l}  \bU_S^{(l')}\bV_S^{(l')T}$
      \State Set $\bX_{ij}^{(l)} = \mathbf{0}$ where $\bC_S[j,l]=0$ for $j=1,...,J$ or $\bR_S[i,l]=0$ for $i=1,...,I$
    \State Update $\bU_S^{(l)} = \bXddl\bV_S^{(l)}(\bV_S^{(l)T}\bV_S^{(l)} + \lambda_S^{(l)}\bI_{r_S})^{-1}$
      \State Update $\bV_S^{(l)} = \bXdd^{(l)T}\bU_S^{(l)}(\bU_S^{(l)T}\bU_S^{(l)} + \lambda_S^{(l)}\bI_{r_S})^{-1}$
     \EndFor
    \EndWhile
     \State Set $\bBdhk=\bU_{Bh}^{(k)}\bV_{Bh}^{(k)T}$ for all $k=1,..,K$,$h=1,...,H$, and $\bSddl=\bU_S^{(l)}\bV_S^{(l)T}$ for all $l=1,..,L$
  \end{algorithmic}
\end{algorithm}

Algorithm \ref{paper3_opt_alg} describes an alternating least squares algorithm to solve the objective with $L_2$ penalties.  The symbol $\bigotimes$ represents the Kronecker product. Note that \cref{ALSUV_barrr} does not require the columns of $\bYd^{(k)}$ to be orthogonal. When $\bYd^{(k)}$ is semi-orthogonal, we develop an alternative approach by iterative soft-singular value thresholding estimators for \eqref{paper3_ObjZ} in \cref{ALSBS_barrr}.

\begin{algorithm}
    \caption{Iterative soft-thresholding matrix decompositions for objective~\eqref{paper3_ObjZ}} \label{ALSBS_barrr}
  \begin{algorithmic}[1]
    \INPUT Orthogonal covariates $\bYdd$ and corresponding multivariate outcomes $\bXdd$; penalizing terms $\lambda_{Bh}, \lambda_S$; binary indicator matrices $\bR_{Bh},\bC_{Yh}, \bC_S,\bR_S$
    \OUTPUT $\bB,\bS$
    \State \textbf{Initialization} Construct $\{\bYhdk\}$ based on $\bC_{Yh}$. Assign initialized numbers for each entry of $\{\bBdhk\},\{\bSddl\}$
    \While{convergence criterion does not meet}
    \For{$h=1,..,H$}
    \For{$k=1,..,K_h$}
    \State Compute the residual matrix $\bXdd^{(hk)} = \bXdd - \sum^{H}_{h'=1,h'\neq h} \sum^{K_h}_{k'=1,k'\neq k} \mathbf{B}_{\boldsymbol{\cdot}h'}^{(k')}\mathbf{Y}_{h'\boldsymbol{\cdot}}^{(k')} - \sum^{L}_{l=1}  \bSddl$
    \State Set $\bX_{ij}^{(hk)} = \mathbf{0}$ where $\bC_{Yh}[j,l]=0$ for $j=1,...,J$ or $\bR_{Bh}[i,l]=0$ for $i=1,...,I$
    \State Compute the SVD of $\bXdd^{(hk)}\bYhdk$, i.e. $\bXdd^{(hk)}\bYhdkT = \bL_B^{(kh)}\bD_B^{(kh)}\bR_{B}^{(kh)}$
    \State Update $\bBdhk = \bL_{B}^{(kh)}\widehat{\bD}_B^{(kh)}\bR_{B}^{(kh)}$ where $\widehat{\bD}_B^{(kh)}$ is a diagonal matrix with $\widehat{\bD}_B^{(kh)}[r,r] = max(\bD_B^{(kh)}[r,r]-\lambda_{Bh}^{(k)},0)$ for $r=1,2,...$ on its diagonal entries and zero otherwise
    \EndFor
    \EndFor
    \For{$l=1,..,L$}
    \State Compute the residual matrix $\bXddl = \bXdd - \sum^{K}_{k=1} \bBdhk \bYhdk - \sum^{L}_{l=1,l'\neq l}  \bSddlt$
      \State Set $\bX_{ij}^{(l)} = \mathbf{0}$ where $\bC_S[j,l]=0$ for $j=1,...,J$ or $\bR_S[i,l]=0$ for $i=1,...,I$
    \State Compute the SVD of $\bXddl$, i.e. $\bXddl = \bL_{S}^{(l)}\bD_S^{(l)}\bR_{S}^{(l)}$
    \State Update $\bSddl = \bL_{S}^{(l)}\widehat{\bD}_S^{(l)}\bR_{S}^{(l)}$ where $\widehat{\bD}_S^{(l)}$ is a diagonal matrix with $\widehat{\bD}_S^{(l)}[r,r] = max(\bD_S^{(l)}[r,r]-\lambda_S^{(l)},0)$ for $r=1,2,...$ on its diagonal entries and zero otherwise
    \EndFor  
    \EndWhile
  \end{algorithmic}
\end{algorithm}

\section{Proofs}
\subsection{Proof of Theorem 2}
\label{paper3_proof_thm2}
\begin{proof}
We demonstrate the proof for \cref{paper3_ObjZ} and the proof for \cref{paper3_ObjUV} will be identical. For simplicity, we provide proof while $H = 1$ since a similar proof can be made for a larger $H$.
Define our optimization problem in \ref{paper3_ObjZ} as follows:
\begin{align*}
&\min_{\{\bBdk\}^K_{k=1},\{\bS_{\boldsymbol{\cdot}\boldsymbol{\cdot}}^{(l)}\}^L_{l=1}} F(\bBd^{(1)},...,\bBdk,\bSddONE,...,\bSddL)\\
=&\min_{\{\bBdk\}^K_{k=1},\{\bS_{\boldsymbol{\cdot}\boldsymbol{\cdot}}^{(l)}\}^L_{l=1}} f(\bBd^{(1)},...,\bBdk,\bSddONE,...,\bSddL) + \sum_{i=1}^{K+L}r_i(\bBd^{(1)},...,\bBdk,\bSddONE,...,\bSddL)\\
    =&\min_{\{\bBdk\}^K_{k=1},\{\bS_{\boldsymbol{\cdot}\boldsymbol{\cdot}}^{(l)}\}^L_{l=1}} \{ \frac{1}{2} ||\bX_{\boldsymbol{\cdot}\boldsymbol{\cdot}} - \sum^{K}_{k=1} \bBdk \bYd^{(k)} - \sum^{L}_{l=1}  \bSddl||_F^2 + \sum^{K}_{k=1} \lambda_B^{(k)} ||\bBdk||_{*} +\sum^{L}_{l=1} \lambda_S^{(l)} ||\bS_{\boldsymbol{\cdot}\boldsymbol{\cdot}}^{(l)}||_{*}  \}
\end{align*}
We have $K+L$ blocks in total. Let $\bx_i^t$ denote the value of $\bx_i$ after its $t$th update, and let \\ $f_i^t(\bx) \triangleq f(\bBd^{(1)t},...,\bBd^{(i-1)t},\bBd^{(i)},\bBd^{(i+1)t-1},...,\bBd^{(K)t-1},\bS_{\boldsymbol{\cdot\cdot}}^{(1)t-1},...,\bS_{\boldsymbol{\cdot\cdot}}^{(L)t})$ for $i = 1,...,K, i.e.$ blocks in $\bBdk, k = 1,...,K,$ \\and $f_i^t(\bx) \triangleq f(\bBd^{(1)t},...,\bBd^{(K)t},\bS_{\boldsymbol{\cdot\cdot}}^{(1)t},...,\bS_{\boldsymbol{\cdot\cdot}}^{(i-1)t},\bS_{\boldsymbol{\cdot\cdot}}^{(i)},\bS_{\boldsymbol{\cdot\cdot}}^{(i+1)t-1},...,\bS_{\boldsymbol{\cdot\cdot}}^{(L)t-1})$ for $i = K+1,...,K+L, i.e.$ blocks in $\bSddl, l = 1,...,L.$ 
Follow the definition in \citep{xu2013block}, the loss function is block multi-convex, since for each $\bBdk$ and $\bSddl$, it is a convex function of $\bBdk$ or $\bSddl$ while all the other blocks are fixed. Our two optimization algorithms are both alternating least square which gives the optimal solution in each iteration. Once the assumptions in Lemma 2.6 in \citep{xu2013block} are verified, we can claim our block coordinate descent method outputs global minima by Theorem 2.8.

Now we verify the following five conditions:
\begin{enumerate}
    \item (General Assumption 1) F is continuous in its domain $dom(F) = \bR^{K \times p \times q + L \times p \times n}$ and $\inf_{x \in dom(F)} F(x) > -\infty$. The loss function has a Nash point\\ $\bar{\bx} = (\bar{\bB}_{\boldsymbol{\cdot\cdot}}^{(1)},...,\bar{\bB}_{\boldsymbol{\cdot\cdot}}^{(k)},...,\bar{\bB}_{\boldsymbol{\cdot\cdot}}^{(K)},\bSddONEbar,...,\bSddlbar,...,\bSddLbar)$, i.e. a block coordinate minimizer. For 
    each coordinate $k = 1,...,K$, $F(\bar{\bB}_{\boldsymbol{\cdot\cdot}}^{(1)},...,\bar{\bB}_{\boldsymbol{\cdot\cdot}}^{(k)},...,\bar{\bB}_{\boldsymbol{\cdot\cdot}}^{(K)},\bSddONEbar,...,\bSddLbar) \leq F(\bar{\bB}_{\boldsymbol{\cdot\cdot}}^{(1)},...,\bBdk,...,\bar{\bB}_{\boldsymbol{\cdot\cdot}}^{(K)},\bSddONEbar,...,\bSddLbar)$ and for $l = 1,...,L, $ \\$F(\bar{\bB}_{\boldsymbol{\cdot\cdot}}^{(1)},...,\bar{\bB}_{\boldsymbol{\cdot\cdot}}^{(K)},\bSddONEbar,...,\bSddlbar,...,\bSddLbar) \leq F(\bar{\bB}_{\boldsymbol{\cdot\cdot}}^{(1)},...,\bar{\bB}_{\boldsymbol{\cdot\cdot}}^{(K)},\bSddONEbar,...,\bSddl,...,\bSddLbar)$.
  \item (General Assumption 2)  Each block $i (i=1,...,K+L)$ is updated by the same scheme among block coordinate descent (BCD) method of Gauss–Seidel type, which minimizes $F$ cyclically over each of $\bBd^{(1)},...,\bBdk,\bSddONE,...,\bSddL$ while fixing the remaining blocks at their last updated values, for all updates. In addition, there exist constants $0< l_i \leq L_i <\infty, i = 1,...,K+L$, such that $f_i^k$ is strongly convex with modulus $l_i \leq L^{t-1}_i \leq L_i$, where $t$ denotes the $t$th update for one block, i.e. $f_i^t$ is $L^{t-1}_i$ strongly convex. For block $i=1,...,K$, $f_i^t = \frac{1}{2} ||\bX_{\boldsymbol{\cdot}\boldsymbol{\cdot}} - \sum^{i-1}_{s=1} \bBd^{(s)t} \bY_{\boldsymbol{\cdot}}^{(s)} -\bBd^{(i)} \bY_{\boldsymbol{\cdot}}^{(i)} - \sum^{K}_{s=i+1} \bBd^{(s)t-1} \bY_{\boldsymbol{\cdot}}^{(s)} - \sum^{L}_{l=1}  \bSddl^{t-1}||_F^2 $. The gradient of $f_i^t$ is $\nabla f_i^t = \bB_i \bY_{\boldsymbol{\cdot}}^{(i)} \bY_{\boldsymbol{\cdot}}^{(i)T} - (\bX_{\boldsymbol{\cdot}} - \sum^{i-1}_{s=1} \bBd^{(s)t} \bY_{\boldsymbol{\cdot}}^{(s)} - \sum^{K}_{s=i+1} \bB^{(s)t-1} \bY_{\boldsymbol{\cdot}}^{(s)} - \sum^{L}_{l=1}  \bSddl^{t-1}) \bY_{\boldsymbol{\cdot}}^{(i)T}$ and the Hessian matrix of $f_i^t$ is $\boldsymbol{H}f_i^t = \bY_{\boldsymbol{\cdot}}^{(i)} \bY_{\boldsymbol{\cdot}}^{(i)T}$. The Hessian is constant at each update step. For all K blocks, to qualify the strong convexity requirement, we need all $\boldsymbol{H}_f = \bY_{\boldsymbol{\cdot}}^{(i)} \bY_{\boldsymbol{\cdot}}^{(i)T} \geq L_i \bI$, which means the eigenvalues of $\bY_{\boldsymbol{\cdot}}^{(i)} \bY_{\boldsymbol{\cdot}}^{(i)T} - L_i \bI$ are all non-negative. This is based on specific values for $\bY_{\boldsymbol{\cdot}}^{(i)}$ but it is plausible. For block $ i = K+1,...,K+L, f_i^t = \frac{1}{2} ||\bXdd - \sum^{K}_{k=1} \bBdk\bYd^{(k)}  - \sum^{i-K-1}_{s=1}  \bSdds^{t} - \bSddik - \sum^{K}_{s=i-K}  \bSdds^{t-1}||_F^2 $. The Hessian is identity matrix for all blocks so it is easy to set $L_i = 1$ to qualify the strong convexity.
 
    \item (Lemma 2.6 assumption 1) The gradient of the squared Frobenius norm is linear so $f$ is Lipschitz continuous on any bounded set. 
    \item (Lemma 2.6 assumption 2)  Since Frobenious norm $f$ is strongly convex as we show in Condition 2 and nuclear norm $r$ is convex as well, $F = f + r$ is also strongly convex. And locally strongly convex functions satisfy the KL-inequality (in the \cite{xu2013block} (2.14)). Therefore, we claim that $F$ satisfies the Kurdyka–Lojasiewicz inequality at $\bar{\bx}$ ($\bar{\bx}$ is a limit point of sequence $\{\bx^k\}$.
    \item (Lemma 2.6 assumption 3) The starting point $\bx^0$ is sufficiently close to $\bar{\bx}$, and the $t$th update $F_t>\bar{F}, \forall t \geq 0$ since the blockwise gradient descent is descending at each step. 
\end{enumerate}
\end{proof}
\section{More Details On The Simulations}
\label{paper3_more_simu}

We also tested the imputation accuracy of baRRR under various signal levels and missing proportion, when there are more signals in the covaraite effects, i.e. true rank of $\bBdk$ is 3 other than 1 in the main context.

\begin{table}[H]
\label{paper3_tab:miss_rank3}
\caption{Imputation relative squared error(RSE) for baRRR under various signal levels and missing proportions (true rank of $\bBdk$ is 3). Each RSE is the mean of 100 replications.}
\centering
\setlength{\tabcolsep}{3pt} 
\begin{tabular}{lcccccccc}
\hline
Signal        & \multicolumn{1}{c}{Missing}     & \multicolumn{1}{c}{entry} &  \multicolumn{1}{c}{full col} & \multicolumn{1}{c}{full col} & \multicolumn{1}{c}{partial col} & \multicolumn{1}{c}{partial col} & 
\multicolumn{1}{c}{row}  &\multicolumn{1}{c}{all missing} \\
levels        & \multicolumn{1}{c}{rate} & \multicolumn{1}{c}{only}  &  \multicolumn{1}{c}{only}        & \multicolumn{1}{c}{and row}     & \multicolumn{1}{c}{only}     & \multicolumn{1}{c}{and row}  & 
\multicolumn{1}{c}{only} &\multicolumn{1}{c}{together}    \\ \hline
standard   & 0.04                            & 0.141                     & 0.573                    & 0.844                           & 0.415                        & 0.786                           & 0.675                        & 0.463                           \\
           & 0.08                            & 0.149                     & 0.582                    & 0.815                           & 0.421                        & 0.784                           & 0.678                        & 0.457                           \\
           & 0.2                             & 0.182                     & 0.597                    & 0.831                           & 0.452                        & 0.779                           & 0.697                        & 0.485                           \\
           & 0.4                             & 0.262                     & 0.641                    & 0.851                           & 0.524                        & 0.818                           & 0.747                        & 0.546                           \\
           & 0.8                             & 0.596                     & 0.827                    & 0.940                           & 0.782                        & 0.931                           & 0.901                        & 0.777                           \\ \hline
high BY    & 0.04                            & 0.050                     & 0.104                    & 0.652                           & 0.085                        & 0.585                           & 0.576                        & 0.223                           \\
           & 0.08                            & 0.053                     & 0.109                    & 0.580                           & 0.090                        & 0.600                            & 0.572                        & 0.207                           \\
           & 0.2                             & 0.063                     & 0.127                    & 0.591                           & 0.106                        & 0.587                           & 0.580                        & 0.224                           \\
           & 0.4                             & 0.086                     & 0.175                    & 0.610                           & 0.150                        & 0.616                           & 0.602                        & 0.259                           \\
           & 0.8                             & 0.217                     & 0.412                    & 0.742                           & 0.367                        & 0.735                           & 0.712                        & 0.434                           \\ \hline
high S     & 0.04                            & 0.135                     & 0.872                    & 0.952                           & 0.259                        & 0.487                           & 0.355                        & 0.399                           \\
global     & 0.08                            & 0.142                     & 0.878                    & 0.956                           & 0.261                        & 0.439                           & 0.357                        & 0.408                           \\
           & 0.2                             & 0.165                     & 0.890                    & 0.958                           & 0.289                        & 0.465                           & 0.376                        & 0.432                           \\
           & 0.4                             & 0.219                     & 0.913                    & 0.968                           & 0.354                        & 0.534                           & 0.437                        & 0.489                           \\
           & 0.8                             & 0.439                     & 0.979                    & 0.994                           & 0.641                        & 0.790                           & 0.676                        & 0.694                           \\ \hline
high S  & 0.04                            & 0.144                     & 0.875                    & 0.947                           & 0.282                        & 0.946                           & 0.926                        & 0.581                           \\
column           & 0.08                            & 0.150                     & 0.882                    & 0.955                           & 0.290                        & 0.949                           & 0.928                        & 0.562                           \\
           & 0.2                             & 0.175                     & 0.892                    & 0.960                           & 0.321                        & 0.952                           & 0.936                        & 0.583                           \\
           & 0.4                             & 0.233                     & 0.914                    & 0.969                           & 0.397                        & 0.965                           & 0.954                        & 0.624                           \\
           & 0.8                             & 0.464                     & 0.979                    & 0.994                           & 0.708                        & 0.994                           & 0.991                        & 0.779                           \\ \hline
high S  & 0.04                            & 0.143                     & 0.875                    & 0.950                           & 0.847                        & 0.944                           & 0.381                        & 0.550                           \\
  row         & 0.08                            & 0.15                      & 0.886                    & 0.951                           & 0.851                        & 0.957                           & 0.394                        & 0.573                           \\
           & 0.2                             & 0.176                     & 0.895                    & 0.960                           & 0.869                        & 0.953                           & 0.418                        & 0.594                           \\
           & 0.4                             & 0.232                     & 0.915                    & 0.969                           & 0.899                        & 0.965                           & 0.489                        & 0.644                           \\
           & 0.8                             & 0.459                     & 0.979                    & 0.994                           & 0.975                        & 0.993                           & 0.727                        & 0.793                           \\ \hline
high S  & 0.04                            & 0.155                     & 0.875                    & 0.973                           & 0.852                        & 0.956                           & 0.930                        & 0.716                           \\
 individual          & 0.08                            & 0.161                     & 0.885                    & 0.949                           & 0.856                        & 0.960                           & 0.930                        & 0.705                           \\
           & 0.2                             & 0.188                     & 0.894                    & 0.962                           & 0.871                        & 0.952                           & 0.938                        & 0.721                           \\
           & 0.4                             & 0.247                     & 0.914                    & 0.970                           & 0.901                        & 0.965                           & 0.955                        & 0.746                           \\
           & 0.8                             & 0.479                     & 0.979                    & 0.994                           & 0.976                        & 0.993                           & 0.991                        & 0.842                        \\  \hline
\end{tabular}
\end{table}

\section{More Details On The Real Data Analysis}
\label{paper3_append_caner_names}

\begin{table}[H]
\caption{Names for cancer types, including the cancer study abbreviations, sample sizes and study names, sourcing from \href{https://gdc.cancer.gov/resources-tcga-users/tcga-code-tables/tcga-study-abbreviations}{National Cancer Institute}.}
\centering
\begin{tabular}{lll}
\hline
Abbreviation & Sample Size & Study Name                                                       \\\hline
ACC          & 77          & Adrenocortical carcinoma                                         \\
BLCA         & 129         & Bladder Urothelial Carcinoma                                     \\
BRCA         & 976         & Breast invasive carcinoma                                        \\
CESC         & 193         & Cervical squamous cell carcinoma  \\
         &          & and endocervical adenocarcinoma \\
COAD         & 147         & Colon adenocarcinoma                                             \\
ESCA         & 184         & Esophageal carcinoma                                             \\
HNSC         & 279         & Head and Neck squamous cell carcinoma                            \\
KICH         & 66          & Kidney Chromophobe                                               \\
KIRC         & 415         & Kidney renal clear cell carcinoma                                \\
KIRP         & 161         & Kidney renal papillary cell carcinoma                            \\
LGG          & 283         & Brain Lower Grade Glioma                                         \\
LIHC         & 195         & Liver hepatocellular carcinoma                                   \\
LUAD         & 230         & Lung adenocarcinoma                                              \\
LUSC         & 178         & Lung squamous cell carcinoma                                     \\
OV           & 115         & Ovarian serous cystadenocarcinoma                                \\
PAAD         & 150         & Pancreatic adenocarcinoma                                        \\
PCPG         & 179         & Pheochromocytoma and Paraganglioma                               \\
PRAD         & 331         & Prostate adenocarcinoma                                          \\
READ         & 64          & Rectum adenocarcinoma                                            \\
SARC         & 245         & Sarcoma                                                          \\
SKCM         & 342         & Skin Cutaneous Melanoma                                          \\
STAD         & 275         & Stomach adenocarcinoma                                           \\
TGCT         & 133         & Testicular Germ Cell Tumors                                      \\
THCA         & 400         & Thyroid carcinoma                                                \\
THYM         & 119         & Thymoma                                                          \\
UCEC         & 242         & Uterine Corpus Endometrial Carcinoma                             \\
UCS          & 57          & Uterine Carcinosarcoma                                           \\
 \hline                                  
\end{tabular}
\label{paper3_tab: cancer_mutation}
\end{table}

\begin{table}[H]
\caption{Names for somatic mutations. Gene labels for 50 somatic mutation data in order.}
\centering
\begin{tabular}{cccccccc}
\hline
Index & Mutation & Index & Mutation & Index & Mutation & Index & Mutation \\\hline
1     & TP53     & 14    & APC      & 27    & FRG1B    & 40    & DST      \\
2     & TTN      & 15    & DNAH5    & 28    & AHNAK2   & 41    & MLL3     \\
3     & MUC16    & 16    & OBSCN    & 29    & APOB     & 42    & MLL2     \\
4     & PIK3CA   & 17    & MUC17    & 30    & ARID1A   & 43    & MACF1    \\
5     & CSMD3    & 18    & FAT4     & 31    & LRP2     & 44    & DNAH9    \\
6     & LRP1B    & 19    & HMCN1    & 32    & XIRP2    & 45    & BRAF     \\
7     & KRAS     & 20    & CSMD1    & 33    & Unknown  & 46    & DNAH11   \\
8     & RYR2     & 21    & MUC5B    & 34    & DMD      & 47    & DNAH8    \\
9     & MUC4     & 22    & ZFHX4    & 35    & PKHD1L1  & 48    & CSMD2    \\
10    & FLG      & 23    & FAT3     & 36    & RYR1     & 49    & MUC2     \\
11    & SYNE1    & 24    & SPTA1    & 37    & RYR3     & 50    & ABCA13   \\
12    & USH2A    & 25    & GPR98    & 38    & NEB      &       &          \\
13    & PCLO     & 26    & PTEN     & 39    & PCDH15   &       &          \\\hline
\end{tabular}
\label{paper3_tab: gene_label}
\end{table}

\end{appendices}


\spacingset{1.5}

\allowdisplaybreaks

\newenvironment{definition}[1][Definition]{\begin{trivlist}
		\item[\hskip \labelsep {\bfseries #1}]}{\end{trivlist}}
\newenvironment{example}[1][Example]{\begin{trivlist}
		\item[\hskip \labelsep {\bfseries #1}]}{\end{trivlist}}
\newenvironment{rmq}[1][Remark]{\begin{trivlist}
		\item[\hskip \labelsep {\bfseries #1}]}{\end{trivlist}}

%
%


\bigskip









\end{document}